\definecolor{colorhkust}{RGB}{20,43,140}
\definecolor{colortsinghua}{RGB}{116,52,129}
\definecolor{color1}{RGB}{128,0,0}
\newtheorem{thm}{Theorem}[section]
\newtheorem{prop}{Proposition}[section]
\theoremstyle{definition}
\newtheorem{defn}{Definition}[section]
\theoremstyle{remark}
\newtheorem*{rem}{Remark}
\begin{document}

        \title{Graph Neural Networks for Scalable Radio Resource Management: Architecture Design and Theoretical Analysis}

      \author{Yifei Shen, \textit{Student Member}, \textit{IEEE}, Yuanming Shi, \textit{Member}, \textit{IEEE}, Jun Zhang, \textit{Senior Member}, \textit{IEEE}, and Khaled B. Letaief, \textit{Fellow}, \textit{IEEE}
      	\thanks{The materials in this paper were presented in part at the IEEE Global Communications Conference (Globecom) Workshops, 2019 \cite{shen2019graph}. This work was supported by the General Research Fund (Project No. 16210719) from the Research Grants Council of Hong Kong.
      		
      	Y. Shen and K. B. Letaief are with the Department of Electronic and Computer Engineering, Hong Kong University of Science and Technology, Hong Kong (E-mail: \{yshenaw, eekhaled\}@ust.hk). K. B. Letaief is also with Peng Cheng Lab in Shenzhen. Y. Shi is with the School of Information Science and Technology, ShanghaiTech University, Shanghai 201210, China (E-mail:  shiym@shanghaitech.edu.cn). J. Zhang is with the Department of Electronic and Information Engineering, The Hong Kong Polytechnic University, Hong Kong (E-mail: jun-eie.zhang@polyu.edu.hk). (The corresponding author is J. Zhang.)}

}
        
        \maketitle

\begin{abstract}
Deep learning has recently emerged as a disruptive technology to solve challenging radio resource management problems in wireless networks. However, the neural network architectures adopted by existing works suffer from poor scalability and generalization, and lack of interpretability. A long-standing approach to improve scalability and generalization is to incorporate the structures of the target task into the neural network architecture. In this paper, we propose to apply graph neural networks (GNNs) to solve large-scale radio resource management problems, supported by effective neural network architecture design and theoretical analysis. Specifically, we first demonstrate that radio resource management problems can be formulated as graph optimization problems that enjoy a universal permutation equivariance property. We then identify a family of neural networks, named \emph{message passing graph neural networks} (MPGNNs). It is demonstrated that they not only satisfy the permutation equivariance property, but also can generalize to large-scale problems, while enjoying a high computational efficiency. For interpretablity and theoretical guarantees, we prove the equivalence between MPGNNs and a family of distributed optimization algorithms, which is then used to analyze the performance and generalization of MPGNN-based methods. Extensive simulations, with power control and beamforming as two examples, demonstrate that the proposed method, trained in an unsupervised manner with unlabeled samples, matches or even outperforms classic optimization-based algorithms without domain-specific knowledge. Remarkably, the proposed method is highly scalable and can solve the beamforming problem in an interference channel with $1000$ transceiver pairs within $6$ milliseconds on a single GPU.

\begin{IEEEkeywords}
Radio resource management, wireless networks, graph neural networks, distributed algorithms, permutation equivariance.
\end{IEEEkeywords}
\end{abstract}

\section{Introduction}
Radio resource management, e.g., power control \cite{chiang2008power} and beamforming \cite{bjornson2014optimal}, plays a crucial role in wireless networks. Unfortunately, many of these problems are non-convex and computationally challenging. Moreover, they need to be solved in a real-time manner given the time-varying wireless channels and the latency requirement of many mobile applications. Great efforts have been put forward to develop effective algorithms for these challenging problems. Existing algorithms are mainly based on convex optimization approaches \cite{shi2015large,shi15largescale}, which have a limited capability in dealing with non-convex problems and scale poorly with the problem size. Problem specific algorithms can be developed, which, however, is a laborious process and requires much problem specific knowledge.

Inspired by the recent successes of deep learning in many application domains, e.g., computer vision and natural language processing \cite{goodfellow2016deep}, researchers have attempted to apply deep learning based methods, particularly, ``learning to optimize'' approaches, to solve difficult optimization problems in wireless networks \cite{sun2018learning,lee2018deep,liang2018towards,cui2018spatial,shen2018lora,xia2019deep,lee2019graph,eisen2019optimal,dong2020faster}. The goal of such methods is to achieve near-optimal performance in a real-time manner without domain knowledge, i.e., to automate the algorithm design process. There are two common paradigms on this topic \cite{bengio2018machine,liang2019deep}. The first one is ``end-to-end learning'', which directly employs a neural network to approximate the optimal solution of an optimization problem. For example, in \cite{sun2018learning}, to solve the power control problem, a multi-layer perceptron (MLP) was used to approximate the input-output mapping of the classic weighted minimum mean square error (WMMSE) algorithm \cite{Shi2011An} to speed up the computation. The second paradigm is ``learning alongside optimization'', which replaces an ineffective policy in a traditional algorithm with a neural network. For example, an MLP was utilized in \cite{shen2018lora} to replace the pruning policy in the branch-and-bound algorithm.  Accordingly, significant speedup and performance gain in the access point selection problem was achieved compared with the optimization-based methods in \cite{shi2014group,shi2018enhanced}.   

A key design ingredient underlying both paradigms of ``learning to optimize'' is the neural network architecture. Most of the existing works adopt MLPs \cite{sun2018learning,liang2018towards,shen2018lora,sun2020data} or convolutional neural networks (CNNs) \cite{lee2018deep,xia2019deep}. These architectures are inherited from the ones developed for image processing tasks and thus are not tailored to problems in wireless networks. Although near-optimal performance is achieved for small-scale wireless networks, they fail to exploit the wireless network structure and thus suffer from poor scalability and generalization in large-scale radio resource management problems. Specifically, the performance of these methods degrades dramatically when the wireless network size becomes large. For example, it was shown in \cite{sun2018learning} that the performance gap to the WMMSE algorithm is $2\%$ when $K = 10$ and it becomes $12\%$ when $K = 30$. Moreover, these methods generalize poorly when the number of agents in the test dataset is larger than that in the training dataset. In dense wireless networks, resource management may involve thousands of users simultaneously and the number of users changes dynamically, thus, making the wide application of these learning-based methods very difficult.

A long-standing idea to improve scalability and generalization is to incorporate the structures of the target task into the neural network architecture \cite{bengio2018machine,ravanbakhsh2017equivariance,xu2019what,sun2020data}. A prominent example is the development of CNNs for computer vision, which is inspired by the fact that the neighbor pixels of an image are useful when they are considered together \cite{brendel2019approximating}. This idea has also been successfully applied in many applications, e.g., visual reasoning \cite{xu2019what}, combinatorial optimization \cite{li2018combinatorial}, and route planning \cite{zhuang2020towards}. To achieve better scalability of learning-based radio resource management, structures in a single-antenna system with homogeneous agents have recently been exploited for effective neural network architecture design \cite{cui2018spatial,eisen2019optimal}. In static channels, observing that channel states are deterministic functions of users' geo-locations in a 2D Euclidean space, spatial convolution was developed in \cite{cui2018spatial}, which is applicable in wireless networks with thousands of users but cannot handle fading channels. With fading channels, it was observed that the channel matrix can be viewed as the adjacency matrix of a graph \cite{eisen2019optimal}. From this perspective, a random edge graph neural network (REGNN) operating on such a graph was developed, which inhibits a good generalization property when the number of users in the wireless networks changes. However, in a multi-antenna system or a single-antenna system with heterogeneous agents, the channel matrix no longer fits the form of an adjacency matrix and the REGNN cannot be applied. 

In this paper, we address the limitations of existing works by modeling wireless networks as \emph{wireless channel graphs} and develop neural networks to exploit the graph topology. Specifically, we treat the agents as nodes in a graph, communication channels as directed edges, agent specific parameters as node features, and channel related parameters as edge features. Subsequently, low-complexity neural network architectures operating on wireless channel graphs will be proposed.

Existing works (e.g., \cite{sun2018learning,shen2018lora,lee2019graph}) also have another major limitation, namely, they treat the adopted neural network as a black box. Despite the superior performance in specific applications, it is hard to interpret what is learned by the neural networks. To ensure reliability, it is crucial to understand when the algorithm works and when it fails. Thus, a good theoretical understanding is demanded for the learning-based radio resource management methods. Compared with learning-based methods, conventional optimization-based methods are well-studied. This inspires us to build a relationship between these two types of methods. In particular, we shall prove the \emph{equivalence} between the proposed neural networks and a favorable family of optimization-based methods. This equivalence will allow the development of tractable analysis for the performance and generalization of the learning-based methods through the study of their equivalent optimization-based methods.

\subsection{Contributions}
In this paper, we develop scalable learning-based methods to solve radio resource management problems in dense wireless networks. The major contributions are summarized as follows:

\begin{enumerate}
	\item We model wireless networks as wireless channel graphs and formulate radio resource management problems as graph optimization problems. We then show that a permutation equivariance property holds in general radio resource management problems, which can be exploited for effective neural network architecture design.
	
	\item We identify a favorable family of neural networks operating on wireless channel graphs, namely MPGNNs. It is shown that MPGNNs satisfy the permutation equivariance property, and have the ability to generalize to large-scale problems while enjoying a high computational efficiency. 
	
	\item For an effective implementation, we propose a \emph{wireless channel graph convolution network} (WCGCN) within the MPGNN class. Besides inheriting the  advantages of MPGNNs, the WCGCN enjoys several unique advantages for solving radio resource management problems. First, it can effectively exploit both agent-related features and channel-related features effectively. Second, it is insensitive to the corruptions of features, e.g., channel state information (CSI), implying that they can be applied with partial and imperfect CSI.
	
	\item To provide interpretability and theoretical guarantees, we prove the equivalence between MPGNNs and a family of distributed optimization algorithms, which include many classic algorithms for radio resource management, e.g., WMMSE \cite{Shi2011An}. Based on this equivalence, we analyze the performance and generalization of MPGNN-based methods in the weighted sum rate maximization problem.

	\item We test the effectiveness of WCGCN for power control and beamforming problems, training with unlabeled data. Extensive simulations will demonstrate that the proposed WCGCN matches or outperforms classic optimization-based algorithms without domain knowledge, and with significant speedups. Remarkably, WCGCN can solve the beamforming problem with $1000$ users within $6$ milliseconds on a single GPU.\footnote{The codes to reproduce the simulation results are available on https://github.com/yshenaw/GNN-Resource-Management.}
	
\end{enumerate}

\subsection{Notations} Throughout this paper, superscripts $(\cdot)^H$, $(\cdot)^T$, $(\cdot)^{-1}$ denote conjugate transpose, transpose, inverse, respectively. The symbol $X_{(i_1, \cdots, i_n)}$ denotes an element in tensor $X$ indexed by $i_1, \cdots, i_n$. For example, $\bm{X}_{(2,3)}$ is the element in the second row third column in matrix $\bm{X}$. The set symbol $\{ \}$ in this paper denotes a multiset. A multiset is a $2$-tuple $X=(S,m)$ where $S$ is the underlying set of $X$ that is formed from its distinct elements, and $m: S \rightarrow \mathbb{N}_{\geq 1}$ gives the multiplicity of elements. For example, $\{a,a,b\}$ is a multiset where element $a$ has multiplicity $2$ and element $b$ has multiplicity $1$.

\section{Graph Modeling of Wireless Networks} \label{sec:graph}
In this section, we model wireless networks as graphs, and formulate radio resource management problems as graph optimization problems. Key properties of radio resource management problems will be identified, which will then be exploited to design effective neural network architectures.

\subsection{Directed Graphs and Permutation Equivariance Property}
A directed graph can be represented as an order pair $G = (V,E)$, where $V$ is the set of nodes and $E$ is the set of edges. The \emph{adjacency matrix} of a graph is an $n \times n$ matrix $\bm{A} \in \{0,1\}^{n \times n}$, where $\bm{A}_{i,j} = 1$ if and only if $(i,j) \in E$ for all $i,j \in V$. Let $[n] = \{1, \cdots, n \}$ and we denote the permutation operator as $\pi:[n] \rightarrow [n]$. 
Given the permutation $\pi$ and a graph adjacency matrix $\bm{A}$, the permutation of nodes is denoted by $\pi \star \bm{A}$ and defined as
\begin{align*}
(\pi \star \bm{A})_{(\pi(i_1),\pi(i_2)) } = \bm{A}_{ (i_1,i_2) },
\end{align*} 
for index $1 \leq i_1,i_2 \leq |V|$. Two graphs $\bm{A}$ and $\bm{B}$ are said to be \emph{isomorphic} if there is a permutation $\pi$ such that $\pi \star \bm{A} = \bm{B}$, and this relationship is denoted by $\bm{A} \cong \bm{B}$.

We now introduce optimization problems defined on directed graphs, and identify their permutation invariance and equivariance properties. We assign each node $v_i \in V$ an optimization variable $\gamma_i \in \mathbb{R}$. We denote the optimization variable as $\bm{\gamma} = [\gamma_1,\cdots,\gamma_{|V|}]^T$ and the permutation of the optimization variable as 
\begin{align*}
(\pi \star \bm{\gamma})_{(\pi(i_1)) } = \bm{\gamma}_{(i_1)}. 
\end{align*}

The optimization problem defined on a graph $\bm{A}$ can be written as 
\begin{equation} \label{eq:sg_opt}
\begin{aligned}
&\mathscr{Q}:\underset{\bm{\gamma}}{\text{minimize}}
& & g(\bm{\gamma},\bm{A})
& \text{ subject to }
& & Q(\bm{\gamma},\bm{A}) \leq 0,
\end{aligned}
\end{equation}
where $g(\cdot,\cdot)$ represents the objective function and $Q(\cdot,\cdot)$ represents the constraint.

As $\bm{A} \cong \pi \star \bm{A}$, optimization problems defined on graphs have the permutation invariance property as stated below.

\begin{prop} (Permutation invariance) \label{prop:pi_dg}
	The optimization problem defined in (\ref{eq:sg_opt}) has the following property
	\begin{align*}
		g(\bm{\Gamma},\bm{A}) = g(\pi \star \bm{\gamma},\pi \star \bm{A}), \quad Q(\bm{\gamma},\bm{A}) = Q(\pi \star \bm{\gamma},\pi \star \bm{A}),
	\end{align*}
	for any permutation operator $\pi$.
\end{prop}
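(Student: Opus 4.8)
The plan is to read the proposition as a statement that the problem $\mathscr{Q}$ lives on the \emph{graph} $G=(V,E)$ rather than on any particular labeling of its nodes. The operator $\pi \star (\cdot)$ merely renames the node indices, so $\pi \star \bm{A}$ and $\bm{A}$ encode one and the same graph (this is precisely the isomorphism $\bm{A} \cong \pi \star \bm{A}$ recorded just before the statement), and $\pi \star \bm{\gamma}$ is the same variable assignment transported along this renaming. The whole argument then reduces to observing that applying the \emph{same} permutation to the adjacency matrix and to the variable vector preserves every node-to-variable and edge-to-entry association that $g$ and $Q$ depend on, so their values cannot change.

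Carrying this out, I would start from the two defining identities
\begin{align*}
(\pi \star \bm{A})_{(\pi(i_1),\pi(i_2))} = \bm{A}_{(i_1,i_2)}, \qquad (\pi \star \bm{\gamma})_{(\pi(i_1))} = \bm{\gamma}_{(i_1)},
\end{align*}
which show that the entry indexed by $(\pi(i_1),\pi(i_2))$ in the permuted adjacency matrix and the coordinate indexed by $\pi(i_1)$ in the permuted variable vector carry exactly the values $\bm{A}_{(i_1,i_2)}$ and $\bm{\gamma}_{(i_1)}$ of the original problem. Since $\pi$ is a bijection on $[|V|]$, substituting the change of index $i_k \mapsto \pi(i_k)$ inside $g(\pi \star \bm{\gamma}, \pi \star \bm{A})$ and $Q(\pi \star \bm{\gamma}, \pi \star \bm{A})$ replaces each permuted argument by its unpermuted counterpart without creating or destroying any term; the equalities $g(\bm{\gamma},\bm{A}) = g(\pi \star \bm{\gamma}, \pi \star \bm{A})$ and $Q(\bm{\gamma},\bm{A}) = Q(\pi \star \bm{\gamma}, \pi \star \bm{A})$ then follow.

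The only real obstacle is conceptual rather than computational: one must make precise why $g$ and $Q$ are insensitive to relabeling in the first place, since for a completely arbitrary pair of functions on $\mathbb{R}^{|V|} \times \{0,1\}^{|V|\times|V|}$ the claim would be false. The justification is exactly the premise that $\mathscr{Q}$ is an optimization problem \emph{defined on a graph}: the objective and constraint are built from the intrinsic node and edge data, so isomorphic representations $\bm{A} \cong \pi \star \bm{A}$ (with variables transported accordingly) yield identical values. Once this premise is stated explicitly, the proof is pure index bookkeeping with no estimates involved.
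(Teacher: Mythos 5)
Your proposal is correct and matches the paper's own proof, which likewise argues in one line that $\bm{A}$ and $\pi \star \bm{A}$ represent the same graph, so simultaneously permuting $\bm{\gamma}$ and $\bm{A}$ is merely a reordering of variables leaving $g$ and $Q$ unchanged. Your added remark that the claim rests on $g$ and $Q$ being intrinsically graph-defined (and would fail for arbitrary functions) simply makes explicit the premise the paper uses implicitly.
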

\begin{proof}
	Since adjacency matrices $\bm{A}$ and $\pi \star \bm{A}$ represent the same graph, permuting $\bm{\gamma}$ and $\bm{A}$ simultaneously is simply a reordering of the variables. As a result, we have $g(\bm{\Gamma},\bm{A}) = g(\pi \star \bm{\gamma},\pi \star \bm{A})$ and $Q(\bm{\gamma},\bm{A}) = Q(\pi \star \bm{\gamma},\pi \star\bm{A})$.
\end{proof}

The permutation invariance property of the objective value and constraint leads to the corresponding property of sublevel sets. We first define the sublevel sets.

\begin{defn} (Sublevel sets)
	The $\alpha$ sublevel set of a function $f:\mathbb{C}^n \rightarrow \mathbb{R}$ is defined as 
	\begin{align*}
		\mathcal{R}^{\alpha}_f = \{x \in \text{dom} f | f(x) \leq \alpha  \},
	\end{align*}
	where $\text{dom} f$ is the feasible domain.
\end{defn}

Denote the optimal objective value of (\ref{eq:sg_opt}) as $z^*$, and the set of $\epsilon$-accurate solutions as $\mathcal{R}^{z^*+\epsilon}$. Thus, the properties of sublevel sets imply the properties of near-optimal solutions. Specifically, the permutation invariance property of the objective function implies the permutation equivariance property of the sub-level sets, which is stated in the next proposition.

\begin{prop} (Permutation equivariance) \label{prop:pe_dg}
	Denote $\mathcal{R}^{\alpha}_g$ as the sublevel set of $g(\cdot,\cdot)$ in (\ref{eq:sg_opt}), and define $F:\bm{A} \mapsto \mathcal{R}_g^{\alpha}$. Then,
	\begin{align*}
		F(\pi \star \bm{A}) = \{\pi \star \bm{\gamma}|\bm{\gamma} \in \mathcal{R}^{\alpha}_g\},
	\end{align*}
	for any permutation operator $\pi$.
\end{prop}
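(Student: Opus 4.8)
The plan is to establish the set equality $F(\pi \star \bm{A}) = \{\pi \star \bm{\gamma} \mid \bm{\gamma} \in \mathcal{R}^{\alpha}_g\}$ by proving the two inclusions separately, using the permutation invariance of the objective (Proposition~\ref{prop:pi_dg}) as the sole engine and the bijectivity of the permutation action as the key structural fact. First I would unwind the definitions: by construction $F(\pi \star \bm{A})$ is precisely the set $\{\bm{\delta} \mid g(\bm{\delta},\pi \star \bm{A}) \leq \alpha\}$, while the right-hand side collects all images $\pi \star \bm{\gamma}$ of points $\bm{\gamma}$ satisfying $g(\bm{\gamma},\bm{A}) \leq \alpha$.

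For the inclusion $\{\pi \star \bm{\gamma} \mid \bm{\gamma} \in \mathcal{R}^{\alpha}_g\} \subseteq F(\pi \star \bm{A})$, I take any $\bm{\gamma}$ with $g(\bm{\gamma},\bm{A}) \leq \alpha$ and apply Proposition~\ref{prop:pi_dg} to obtain $g(\pi \star \bm{\gamma}, \pi \star \bm{A}) = g(\bm{\gamma},\bm{A}) \leq \alpha$, which places $\pi \star \bm{\gamma}$ in the $\alpha$-sublevel set of $g(\cdot,\pi \star \bm{A})$, i.e. in $F(\pi \star \bm{A})$. This direction is immediate and uses nothing beyond the invariance identity. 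The reverse inclusion is where invertibility of $\pi$ enters. Given $\bm{\delta} \in F(\pi \star \bm{A})$, I set $\bm{\gamma} := \pi^{-1} \star \bm{\delta}$, which is well defined because a permutation operator is a bijection on $[n]$, so that $\pi \star (\cdot)$ is an invertible map on the variable vectors with $\pi \star (\pi^{-1} \star \bm{\delta}) = \bm{\delta}$. Applying Proposition~\ref{prop:pi_dg} once more gives $g(\bm{\gamma},\bm{A}) = g(\pi \star \bm{\gamma}, \pi \star \bm{A}) = g(\bm{\delta}, \pi \star \bm{A}) \leq \alpha$, so $\bm{\gamma} \in \mathcal{R}^{\alpha}_g$ and therefore $\bm{\delta} = \pi \star \bm{\gamma}$ lies in the right-hand side. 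Combining the two inclusions yields the claimed equality.

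I do not anticipate a genuine obstacle: the statement is a formal consequence of the invariance already proved in Proposition~\ref{prop:pi_dg} together with the group structure of permutations. The only point demanding a little care is the feasible domain $\text{dom}\,g$ implicit in the sublevel-set definition—membership in it must also be preserved under $\pi \star (\cdot)$. This is handled by observing that the constraint $Q$ obeys the same invariance as $g$ in Proposition~\ref{prop:pi_dg}, so the feasible domain is itself permutation-equivariant and the double-inclusion argument applies verbatim with $g$ replaced by the pair $(g,Q)$ wherever the domain is invoked.
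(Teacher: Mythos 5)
Your proposal is correct and follows essentially the same route as the paper: both rest entirely on the invariance identities of Proposition~\ref{prop:pi_dg} together with the bijectivity of the action $\bm{\gamma} \mapsto \pi \star \bm{\gamma}$, with the paper compressing your two inclusions into a single change-of-variables step inside the set-builder chain $\{\bm{\Gamma} \mid g(\bm{\Gamma},\pi\star\bm{A}) \leq \alpha,\, Q(\bm{\Gamma},\pi\star\bm{A}) \leq 0\} = \{\pi\star\bm{\Gamma} \mid g(\pi\star\bm{\Gamma},\pi\star\bm{A}) \leq \alpha,\, Q(\pi\star\bm{\Gamma},\pi\star\bm{A}) \leq 0\}$. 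Your explicit use of $\pi^{-1}$ for the reverse inclusion and your remark on the feasibility domain via the invariance of $Q$ merely make rigorous what the paper leaves implicit (it folds the constraint $Q \leq 0$ into the set description), so there is no substantive difference.
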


\begin{rem}
   The permutation equivariance property of sublevel sets is a direct result of the permutation invariance in the objective function. Please refer to Appendix \ref{proof:pe_dg} for a detailed proof.
\end{rem}

In the next subsection, by modeling wireless networks as graphs, we show that the permutation equivariance property is universal in radio resource management problems.

\subsection{Wireless Network as a Graph}
\begin{figure*}[htb]
	\centering
	\includegraphics[width=0.9\textwidth]{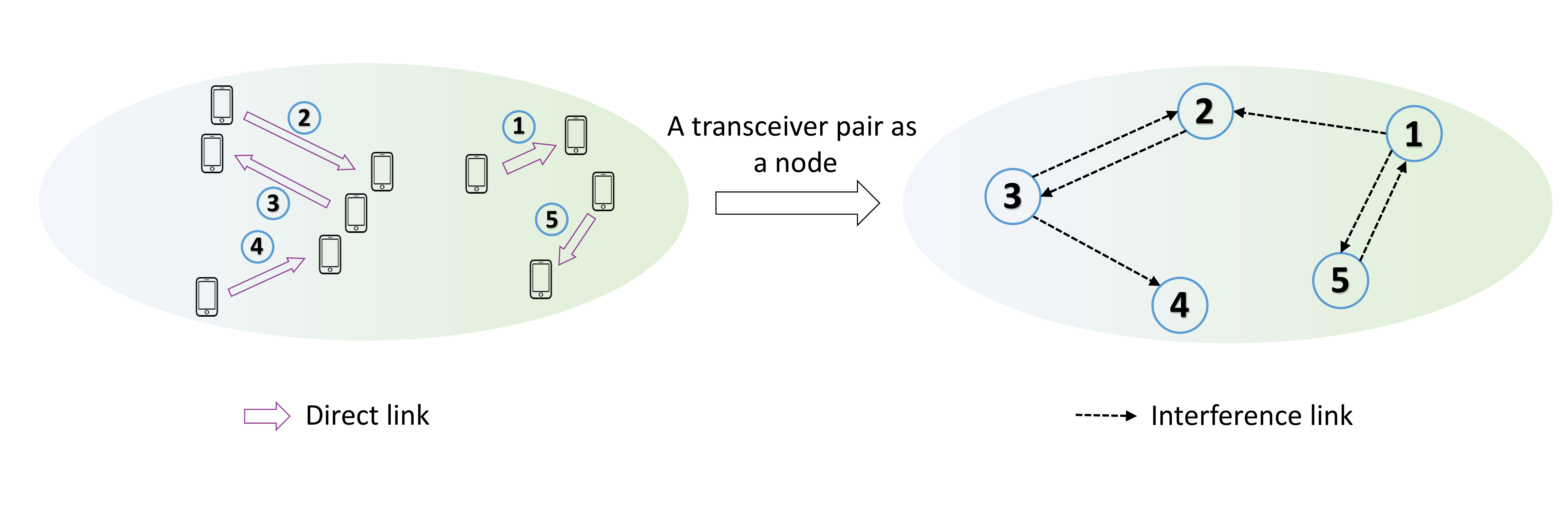}
	\caption{An illustration of graph modeling of a $K$-user interference channel.}
	\label{fig:d2d}
\end{figure*}
A wireless network can be modeled as a directed graph with node and edge features. Naturally, we treat each agent of a wireless network, e.g., a mobile user or a base station, as a node in the graph. An edge is drawn from node $i$ to node $j$ if there is a direct communication or interference link with node $i$ as the transmitter and node $j$ as the receiver. The node feature incorporates the properties of the agent, e.g., users' weights in the weighted sum rate maximization problem \cite{Shi2011An}. The edge feature includes the properties of the corresponding channel, e.g., a scalar (or matrix) to denote the channel state of a single-antenna (or multi-antenna) system. We call these graphs generated by the wireless network topology as \emph{wireless channel graphs}. Formally, a wireless channel graph is an ordered tuple $G = (V,E,s,t)$, where $V$ is the set of nodes, $E$ is the set of edges, $s: V \rightarrow \mathbb{C}^{d_1}$ maps a node to its feature, and $t: E \rightarrow \mathbb{C}^{d_2}$ maps an edge to its feature. Denote $V=\{v_1,v_2,\cdots,v_{|V|}\}$. Also define the node feature matrix as $\bm{Z} \in \mathbb{C}^{|V| \times d_1}$ with $\bm{Z}_{(i,:) } = s(v_i)$, and the adjacency feature tensor $A \in \mathbb{C}^{|V| \times |V| \times d_2}$ as
\begin{align} \label{eq:edge_fea}
A_{ (i,j,:) } = \left\{
\begin{aligned}
&\bm{0}, &&\text{if } \{i,j\} \notin E \\
&t(\{i,j\}) &&\text{otherwise},
\end{aligned}
\right. 
\end{align}
where $\bm{0}$ is a zero vector in $\mathbb{C}^{d_2}$. Given the permutation $\pi$, a graph $G$ with its node feature matrix $\bm{Z}$ and adjacency feature tensor $A$, the permutation of nodes is denoted by $(\pi \star \bm{Z},\pi \star A)$ and defined as
\begin{align*}
(\pi \star \bm{Z})_{ (\pi(i_1),:) } = \bm{Z}_{ (i_1,:) }, \quad (\pi \star A)_{ (\pi(i_1),\pi(i_2),:) } = A_{ (i_1,i_2,:) }.
\end{align*}  

We assign each node $v_i \in V$ an optimization variable $\bm{\gamma}_i \in \mathbb{C}^{n}$. Let $\bm{\Gamma} = [\bm{\gamma}_1,\cdots,\bm{\gamma}_{|V|}]^T \in \mathbb{C}^{|V| \times n}$, then an optimization problem defined on a wireless channel graph can be written as 

\begin{equation} \label{eq:cg_opt}
\begin{aligned}
&\mathscr{P}:\underset{\bm{\Gamma}}{\text{minimize}}
& & g(\bm{\Gamma},\bm{Z},A)
& \text{ subject to }
& & Q(\bm{\Gamma},\bm{Z},A) \leq 0
\end{aligned},
\end{equation}
where $g(\cdot,\cdot,\cdot)$ denotes the objective function and $Q(\cdot, \cdot, \cdot)$ denotes the constraint.

Next we elaborate the properties of the radio resource management problems on the wireless channel graphs. Without node features or edge features, a wireless channel graph is a directed graph. As a result, the properties of wireless channel graphs follow the properties of directed graphs. We elaborate the permutation equivariance property of problems on wireless channel graphs next. 
 Define the permutation of optimization variable as 
	\begin{align*}
	(\pi \star \bm{\Gamma})_{ (\pi(i_1),:) }  = \bm{\Gamma}_{ (i_1,:) }.
	\end{align*}

Similar to optimization problems on directed graphs, the ones defined on wireless channel graphs have the permutation invariance property. As a result, the sub-level sets of $g(\cdot,\cdot,\cdot)$ in (\ref{eq:cg_opt}) also have the permutation equivariance property, which is stated below.

\begin{prop} (Permutation equivariance) \label{prop:pe_cg}
	Let $\mathcal{R}^{\alpha}_g$ denote the sublevel set of $g(\cdot,\cdot,\cdot)$ in (\ref{eq:cg_opt}), and define $F:(\bm{Z},A) \mapsto \mathcal{R}_g^{\alpha}$. Then, 
	\begin{align*}
		F((\pi \star \bm{Z},\pi \star A)) = \{\pi \star \bm{\Gamma}|\bm{\Gamma} \in \mathcal{R}^{\alpha}_g\}.
	\end{align*}
	for any permutation operator $\pi$.
\end{prop}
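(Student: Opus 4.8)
The plan is to prove the set equality by double inclusion, with the permutation invariance of the objective (and of the feasible domain) as the only substantive ingredient. First I would record the invariance statement for wireless channel graphs, which is the exact analogue of Proposition \ref{prop:pi_dg}: since permuting the variable matrix $\bm{\Gamma}$ together with the node-feature matrix $\bm{Z}$ and the adjacency feature tensor $A$ is merely a relabeling of the nodes, we have $g(\bm{\Gamma},\bm{Z},A) = g(\pi \star \bm{\Gamma},\pi \star \bm{Z},\pi \star A)$ and $Q(\bm{\Gamma},\bm{Z},A) = Q(\pi \star \bm{\Gamma},\pi \star \bm{Z},\pi \star A)$ for every permutation $\pi$. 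Writing the sublevel set explicitly as $F(\bm{Z},A) = \{\bm{\Gamma} : Q(\bm{\Gamma},\bm{Z},A)\le 0,\ g(\bm{\Gamma},\bm{Z},A)\le \alpha\}$ puts the feasibility constraint (which defines $\text{dom}\,g$) and the objective bound on an equal footing, and the two invariance identities above show that each defining inequality is preserved under the simultaneous action of $\pi$.

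For the inclusion $\{\pi \star \bm{\Gamma} : \bm{\Gamma} \in \mathcal{R}^{\alpha}_g\} \subseteq F(\pi \star \bm{Z},\pi \star A)$, I would take an arbitrary feasible $\bm{\Gamma}$ with $g(\bm{\Gamma},\bm{Z},A)\le\alpha$ and apply the invariance identities to obtain $Q(\pi \star \bm{\Gamma},\pi \star \bm{Z},\pi \star A)\le 0$ and $g(\pi \star \bm{\Gamma},\pi \star \bm{Z},\pi \star A)\le\alpha$, which is precisely the statement that $\pi \star \bm{\Gamma}$ belongs to $F(\pi \star \bm{Z},\pi \star A)$.

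The reverse inclusion is where the one genuine observation enters. Given any $\bm{\Gamma}'\in F(\pi \star \bm{Z},\pi \star A)$, I would use that the action $\bm{\Gamma}\mapsto \pi \star \bm{\Gamma}$ is a bijection on $\mathbb{C}^{|V|\times n}$ with inverse $\pi^{-1}\star(\cdot)$, so I may set $\bm{\Gamma} = \pi^{-1}\star\bm{\Gamma}'$ and thereby write $\bm{\Gamma}' = \pi \star \bm{\Gamma}$. Applying invariance once more gives $g(\bm{\Gamma},\bm{Z},A) = g(\pi \star \bm{\Gamma},\pi \star \bm{Z},\pi \star A) = g(\bm{\Gamma}',\pi\star\bm{Z},\pi\star A)\le\alpha$, and similarly $Q(\bm{\Gamma},\bm{Z},A)\le 0$, so $\bm{\Gamma}\in\mathcal{R}^{\alpha}_g$ and hence $\bm{\Gamma}'$ lies in $\{\pi \star \bm{\Gamma} : \bm{\Gamma}\in\mathcal{R}^{\alpha}_g\}$. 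Combining the two inclusions yields the claimed equality.

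I do not expect a serious obstacle: the content is essentially that a permutation-invariant objective over a permutation-invariant feasible region has permutation-equivariant sublevel sets, and the whole argument reduces to composing the invariance identity with the invertibility of the permutation action. The only point requiring care is to invoke the surjectivity direction explicitly, namely that every element of the permuted sublevel set has a $\pi$-preimage, rather than silently assuming it; this is exactly what upgrades the claim from the one-sided containment that invariance alone would give to the genuine set equivariance $F(\pi \star \bm{Z},\pi \star A) = \pi \star F(\bm{Z},A)$.
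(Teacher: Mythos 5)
Your proposal is correct and follows essentially the same route as the paper, which proves Proposition \ref{prop:pe_dg} via the invariance identities $g(\bm{\Gamma},\bm{Z},A)=g(\pi\star\bm{\Gamma},\pi\star\bm{Z},\pi\star A)$, $Q(\bm{\Gamma},\bm{Z},A)=Q(\pi\star\bm{\Gamma},\pi\star\bm{Z},\pi\star A)$ and then notes the proof of Proposition \ref{prop:pe_cg} is identical up to notation. Your explicit double inclusion with the inverse action $\pi^{-1}\star(\cdot)$ is just a more careful rendering of the paper's chain of set equalities, where the substitution of the dummy variable $\bm{\Gamma}\mapsto\pi\star\bm{\Gamma}$ silently uses the same bijectivity you spell out.
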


\begin{rem}
	This result establishes a general permutation equivariance property for radio resource management problems. Proposition \ref{prop:pe_cg} is reduced to the results in \cite{eisen2019optimal} if $\bm{Z}$ is an all one matrix and $A \in \mathbb{R}^{|V| \times |V| \times 1}$. By modeling the node heterogeneity into $\bm{Z}$, Proposition \ref{prop:pe_cg} is applicable to heterogeneous agents. By introducing adjacency feature tensor instead of using adjacency matrix, this graph modeling technique can incorporate multi-antenna channel states. The proof is the same as Proposition \ref{prop:pe_dg} by simply changing notations.
\end{rem}

\subsection{Graph Modeling of $K$-user Interference Channels}

In this subsection, as a specific example, we present graph modeling of a classic radio resource management problem, i.e., beamforming for weighted sum rate maximization in a $K$-user interference channel. It will be used as the main test setting for the theoretical study in Section \ref{sec:per} and simulations in Section \ref{sec:exp}. There are in total $K$ transceiver pairs where each transmitter is equipped with $N_t$ antennas and each receiver is equipped with a single antenna. Let $\bm{v}_{k}$ denote the beamformer of the $k$-th transmitter. The received signal at receiver $k$ is $\bm{y}_{k} = \bm{h}_{k,k}^H \bm{v}_{k} s_{k} + \sum_{j\neq k}^K  \bm{h}_{j,k}^H\bm{v}_{j} s_{j} + n_{k}$, where $\bm{h}_{j,k} \in \mathbb{C}^{N_t}$ denotes the channel state from transmitter $j$ to receiver $k$ and $n_{k} \in \mathbb{C}$ denotes the additive noise following the complex Gaussian distribution $\mathcal{CN}(0,\sigma^2_{k})$.

The signal-to-interference-plus-noise ratio (SINR) for receiver $k$ is given by 
\begin{align} \label{eq:sinr}
\text{SINR}_k = \frac{|\bm{h}_{k,k}^H \bm{v}_{k}|^2}{ \sum_{j\neq k}^K |\bm{h}_{j,k}^H\bm{v}_{j}|^2 + \sigma_{k}^2}.
\end{align}

Denote $\bm{V} = [\bm{v}_{1}, \cdots, \bm{v}_{k}]^T \in \mathbb{C}^{K \times N_t}$ as the beamforming matrix. The objective is to find the optimal beamformer to maximize the weighted sum rate, and the problem is formulated as

\begin{equation}\label{eq:sys_mod}
\begin{aligned}
&\underset{\bm{V}}{\text{maximize}}
& & \sum_{k=1}^{K} w_{k} \log_2 \left(1+ \text{SINR}_{k} \right) \\
& \text{subject to}
& & \|\bm{v}_{k}\|_2^2 \leq P_{\text{max} }, \forall k,
\end{aligned}
\end{equation}
where $w_k$ is the weight for the $k$-th pair.

\paragraph{Graph Modeling} We view the $k$-th transceiver pair as the $k$-th node in the graph. As distant agents cause little interference, we draw a directed edge from node $j$ to node $k$ only if the distance between transmitter $j$ and receiver $k$ is below a certain threshold $D$. An illustration of such a graph modeling is shown in Fig. \ref{fig:d2d}. The node feature matrix $\bm{Z} \in \mathbb{C}^{|V| \times (N_t + 2)}$ is given by
\begin{align*}
	\bm{Z}_{ (k,:) } = [ \bm{h}_{k,k}, w_k, \sigma_k^2]^T,
\end{align*}
and the adjacency feature array $A \in \mathbb{C}^{|V| \times |V| \times N_t}$ is given by
\begin{align*}
A_{ (j,k,:) } = \left\{
\begin{aligned}
&\bm{0}, &&\text{if } \{j,k\} \notin E \\
&\bm{h}_{j,k} &&\text{otherwise},
\end{aligned}
\right. 
\end{align*}
where $\bm{0} \in \mathbb{C}^{N_t}$ is a zero vector. With notations $\bm{Z}$, $A$, and $\bm{V}$, SINR can be written as 
\begin{align*}
\text{SINR}_k = \frac{|\bm{Z}_{ (k,1:N_t) }^H \bm{v}_{k}|^2}{ \sum_{j\neq k}^K |\bm{A}_{ (j,k,:) }^H\bm{v}_{j}|^2 + \bm{Z}_{(k,N_t+2)} }.
\end{align*}
and (\ref{eq:sys_mod}) can be written as 
\begin{equation}\label{eq:sys_mod_graph}
\begin{aligned}
&\underset{\bm{V}}{\text{maximize}}
& & g(\bm{V},\bm{Z},A) = \sum_{k=1}^{K} \bm{Z}_{ (k, N_t+1) } \log_2 \left(1+ \text{SINR}_{k} \right) \\
& \text{subject to}
& & Q(\bm{V},\bm{Z},A) = \|\bm{v}_{k}\|_2^2 - P_{\text{max} }\leq 0, \forall k,
\end{aligned}
\end{equation}

Problem (\ref{eq:sys_mod_graph}) has the permutation equivariance property with respect to $\bm{V}$, $\bm{Z}$, and $A$ as shown in Proposition \ref{prop:pe_cg}. To solve this problem efficiently and effectively, the adopted neural network should exploit the permutation equivariance property, and incorporate both node features and edge features. We shall develop an effective neural network architecture to achieve this goal in the next section.

\section{Neural Network Architecture Design for Radio Resource Management}
In this section, we endeavor to develop a scalable neural network architecture for radio resource management problems. A favorable family of GNNs, named, \emph{message passing graph neural networks}, will be identified. The key properties and effective implementation will also be discussed.

\subsection{Optimizing Wireless Networks via Graph Neural Networks}
Most of existing works on ``learning to optimize'' approaches to solve problems in wireless networks adopted MLPs as the neural network architecture \cite{sun2018learning,liang2018towards,shen2018lora}. Although MLPs can approximate well-behaved functions \cite{hornik1989multilayer}, they suffer from poor performance in data efficiency, robustness, and generalization. A long-standing idea for improving the performance and generalization is to incorporate the structures of the target task into the neural network architecture. In this way, there is no need for the neural network to learn such structures from data, which leads to a  more efficient training, and better generalization empirically \cite{ravanbakhsh2017equivariance,sun2020data,pratik2020re,eisen2019optimal} and provably \cite{xu2019what}.

As discussed above, the structures of radio resource management problems can be formulated as optimization problems on wireless channel graphs, which enjoy the permutation equivariance property. In machine learning, there are two classes of neural networks that are able to exploit the permutation equivariance property, i.e., graph neural networks (GNNs) \cite{wu2019comprehensive} and Deep Sets \cite{zaheer2017deep}. Compared with Deep Sets, GNNs not only respect the permutation equivariance property but can also model the interactions among the agents. In wireless networks, the agents interact with each other through channels. Thus, GNNs are more favorable than Deep Sets in wireless networks. This motivates us to adopt GNNs to solve radio resource management problems.

\subsection{Message Passing Graph Neural Networks} \label{sec:mpgnn}
\begin{figure*}[h] 
	\centering
	 
	\subfigure[An illustration of CNNs. In each layer, each pixel convolves itself and neighbor pixels.]{
		\begin{minipage}[t]{0.47\linewidth}
			\centering
			\includegraphics[width=1\linewidth]{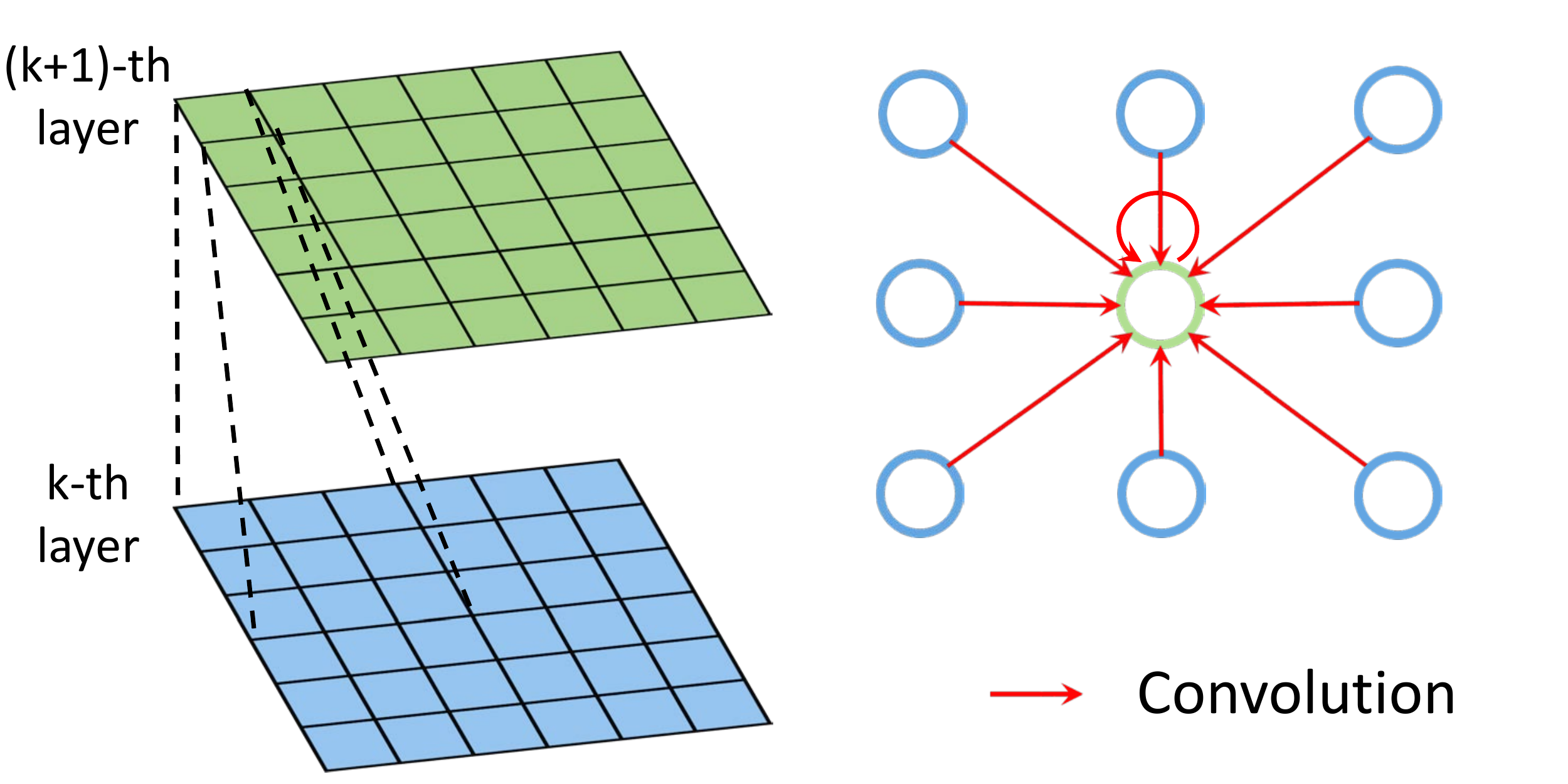}

		\end{minipage}%
	}%
	\hspace{.1in}
	\subfigure[An illustration of SGNNs \cite{xu2018powerful}. In each layer, each node aggregates from neighbor nodes and combines itself's hidden state.]{
		\begin{minipage}[t]{0.47\linewidth}
			\centering
			\includegraphics[width=1\linewidth]{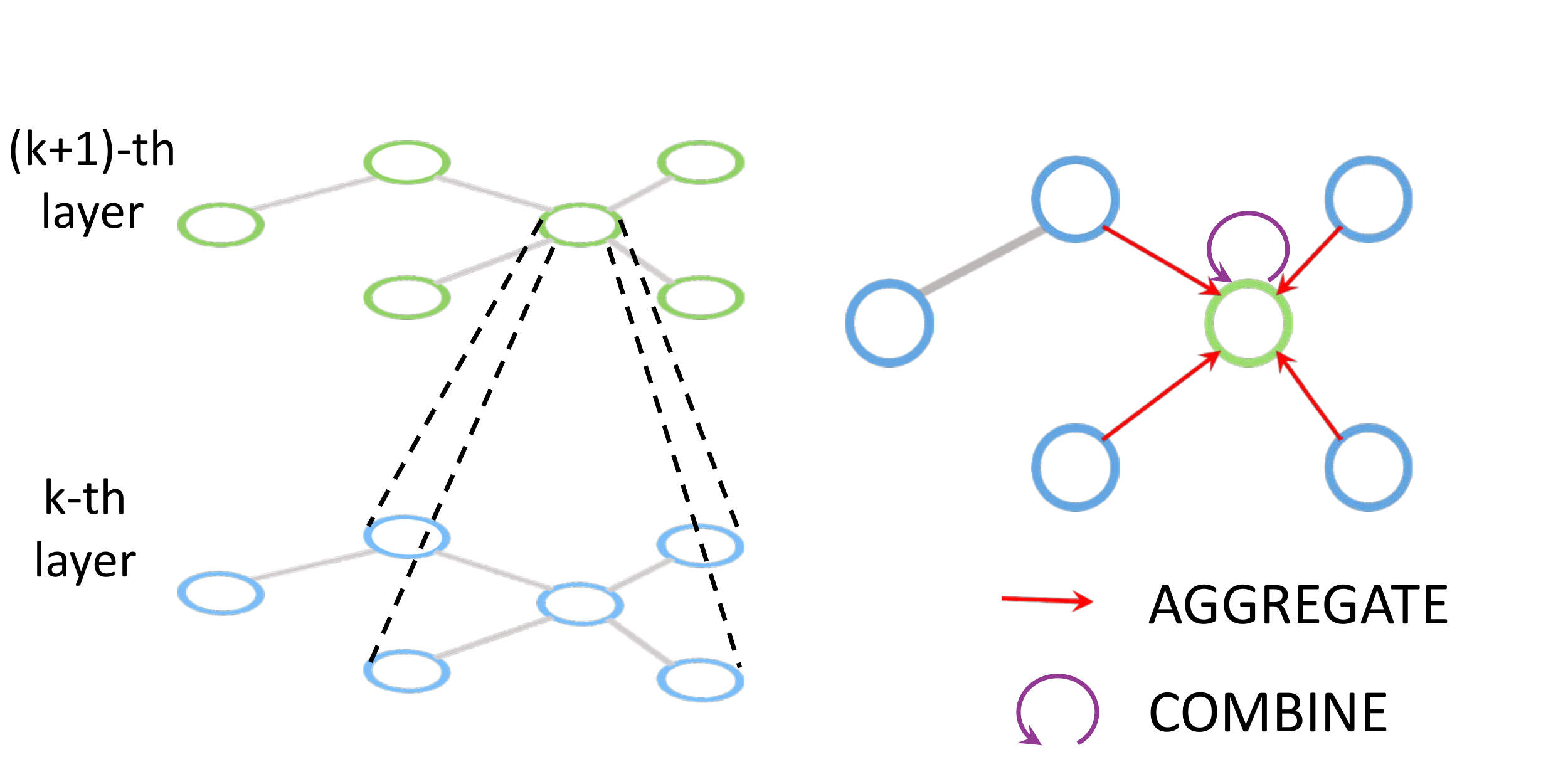}

		\end{minipage}%
	
	}%
	\centering
	
	\caption{Illustrations of CNN and SGNNs. CNN can be viewed as a special SGNNs on the grid graph.}
	\label{fig:CNNGNN}
\end{figure*}

In this subsection, we shall identify a family of GNNs for radio resource management problems, which extend CNNs to wireless channel graphs. In traditional machine learning tasks, the data can typically be embedded in a Euclidean space, e.g., images. Recently, there is an increasing number of applications generated from the non-Euclidean spaces that can be naturally modeled as graphs, e.g., point cloud \cite{wang2019dynamic} and combinatorial problems \cite{li2018combinatorial}. This motivates researchers to develop GNNs \cite{wu2019comprehensive}, which effectively exploit the graph structure. GNNs generalize traditional CNNs, recurrent neural networks, and auto-encoders to the graph tasks. In wireless networks, while the agents are located in the Euclidean space, channel states cannot be embedded in a Euclidean space. Thus, the data in radio resource management problems is also non-Euclidean and neural networks operating on non-Euclidean space are necessary when adopting ``learning to optimize'' approaches in wireless networks.

As a background, we first introduce CNNs, which operate on Euclidean data. Compared with MLPs, CNNs have shown superior performance in image processing tasks. The motivation for CNNs is that adjacent pixels are meaningful to be considered together in images \cite{brendel2019approximating}. Like MLPs, CNNs have a layer-wise structure. In each layer, a 2D convolution is applied to the input. Here we consider a simple CNN with a rectified linear unit and without pooling. In the $k$-th layer, for a pixel located at $(i,j)$, the update is 
\begin{equation}\label{eq:cnn}
\begin{aligned}
\bm{x}_{i,j}^{(k)} = \text{RELU}\left( \sum_{(p,l) \in \mathcal{N}(i,j)} \bm{W}_{i-p,j-l}^{(k)}\bm{x}_{p,l}^{(k-1)}   \right),
\end{aligned}
\end{equation}
where $\bm{x}_{i,j}^{(0)}$ denotes pixel $(i,j)$ of the input image, $\bm{x}_{i,j}^{(k)}$ denotes the hidden state of pixel $(i,j)$ at the $k$-th layer, $\text{RELU}(x) = \text{MAX}(0,x)$ and $\bm{W}^{(k)}_{\cdot,\cdot}$ denotes the weight matrix in the $k$-th layer, and $\mathcal{N}(i,j)$ denotes the neighbor pixels of pixel $(i,j)$. Specifically, for a convolution kernel of size $N\times N$, we have 
$$\mathcal{N}(i,j) = \left\{(p,l):|p-i| \leq \frac{N-1}{2}, |l-j| \leq \frac{N-1}{2} \right\}, $$
and a common choice of $N$ is $3$.

Despite the great success of CNNs in computer vision, they cannot be applied to non-Euclidean data. In \cite{xu2018powerful}, CNNs are extended to graphs from a spatial perspective, which is as efficient as CNNs, while enjoying performance guarantees on graph isomorphism test. We refer to this architecture as the \emph{spatial graph convolutional networks} (SGNNs). In each layer of a CNN (\ref{eq:cnn}), each pixel aggregates information from neighbor pixels and then updates its state. As an analogy, in each layer of a SGNN, each node updates its representation by aggregating features from its neighbor nodes. Specifically, the update rule of the $k$-th layer at vertex $i$ in a SGNN is 
\begin{equation} \label{gnns:SGNNs}
\begin{aligned}
  \bm{x}_i^{(k)} = \alpha^{(k)}\left(\bm{x}_i^{(k-1)}, \phi^{(k)} \left(\left\{\bm{x}_j^{(k-1)}: j \in \mathcal{N}(i)   \right\} \right)   \right),
\end{aligned}
\end{equation}
where $\bm{x}^{(0)}_i = \bm{Z}_{(i,:)}$ is the input feature of node $i$, $\bm{x}^{(k)}_i$ denotes the hidden state of node $i$ at the $k$-th layer, $\mathcal{N}(i)$ denotes the set of the neighbors of $i$, $\phi^{(k)}(\cdot)$ is a set function that aggregates information from the node's neighbors, and $\alpha^{(k)}(\cdot)$ is a function that combines aggregated information with its own information. An illustration of the extension from CNNs to SGNNs is shown in Fig. \ref{fig:CNNGNN}. Particularly, SGNNs include spatial deep learning for wireless scheduling \cite{cui2018spatial} as a special case.

Despite the success of SGNNs in graph problems, it is difficult to directly apply SGNNs on radio resource allocation problems as they cannot exploit the edge features.
This means that they cannot incorporate channel states in wireless networks. We modify the definition in (\ref{gnns:SGNNs}) to exploit edge features and will refer  to it as \emph{message passing graph neural networks} (MPGNNs). The update rule for the $k$-th layer at vertex $i$ in an MPGNN is  
\begin{equation}\label{eq:mpgnn}
\begin{aligned}
 \bm{x}_i^{(k)} = \alpha^{(k)}\left(\bm{x}_i^{(k-1)}, \phi^{(k)} \left(\left\{\left[\bm{x}_j^{(k-1)},\bm{e}_{j,i}\right]: j \in \mathcal{N}(i)   \right\} \right)  \right),
\end{aligned}
\end{equation}
where $\bm{e}_{j,i} = A_{(j,i,:)}$ is the edge feature of the edge $(j,i)$. We represent the output of a $S$-layer MPGNN as

\begin{align} \label{eq:output}
	\bm{X} =  \left[\bm{x}^{(S)}_1, \cdots, \bm{x}^{(S)}_{|V|}\right]^T \in \mathbb{C}^{|V| \times n} 
\end{align}

The extension from SGNNs to MPGNNs is simple but crucial, due to the following two reasons. First, MPGNNs respect the permutation equivariance property in Proposition \ref{prop:pe_cg}. Second, MPGNNs enjoy theoretical guarantees in radio resource management problems (as discussed in Section \ref{sec:analysis}). These two properties are unique for MPGNNs and are not enjoyed by SGNNs.

\subsection{Key Properties of MPGNNs} \label{sec:property}
MPGNNs enjoy properties that are favorable to solving large-scale radio resource management problems, as discussed in the sequel.

\paragraph{Permutation equivariance} 
We first show that MPGNNs satisfy the permutation equivariance property.

\begin{prop} (Permutation equivariance in MPGNNs) \label{prop:pe_gnn}
	Viewing the input output mapping of MPGNNs defined in (\ref{eq:mpgnn}) as $\Phi:(\bm{Z},A) \mapsto \bm{X}$ where $\bm{Z}$ is the node feature matrix, $A$ is the adjacency feature tensor and $\bm{X}$ is the output of an MPGNN in (\ref{eq:output}), we have
	\begin{align*}
	\Phi( (\pi \star \bm{Z}, \pi \star A)  ) = \pi \star \Phi((\bm{Z},A)),
	\end{align*}
	for any permutation operator $\pi$.
\end{prop}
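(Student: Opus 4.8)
The plan is to prove the claim by induction on the layer index $k$, tracking how the hidden state of each node is relabeled under $\pi$. Concretely, I will let $\bm{x}_i^{(k)}$ denote the hidden states produced by the MPGNN on the input $(\bm{Z},A)$, and let $\tilde{\bm{x}}_i^{(k)}$ denote those produced on the permuted input $(\pi \star \bm{Z},\pi \star A)$. The inductive hypothesis I would maintain is that $\tilde{\bm{x}}_{\pi(i)}^{(k)} = \bm{x}_i^{(k)}$ for every node $i$ at every layer $k$. Once this is established at $k = S$, rewriting it through the definition of $\pi \star$ acting on the output in (\ref{eq:output}) immediately yields $\Phi(\pi \star \bm{Z},\pi \star A) = \pi \star \Phi(\bm{Z},A)$.

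For the base case, the initial states satisfy $\bm{x}_i^{(0)} = \bm{Z}_{(i,:)}$, and the definition $(\pi \star \bm{Z})_{(\pi(i),:)} = \bm{Z}_{(i,:)}$ gives $\tilde{\bm{x}}_{\pi(i)}^{(0)} = (\pi \star \bm{Z})_{(\pi(i),:)} = \bm{Z}_{(i,:)} = \bm{x}_i^{(0)}$. Two structural facts then drive the inductive step. First, since $\pi \star A$ merely relabels nodes, the neighborhoods transform consistently: $j \in \mathcal{N}(i)$ in the original graph if and only if $\pi(j) \in \mathcal{N}(\pi(i))$ in the permuted graph. Second, the edge features are carried along by the same relabeling, because $(\pi \star A)_{(\pi(j),\pi(i),:)} = A_{(j,i,:)}$ gives $\tilde{\bm{e}}_{\pi(j),\pi(i)} = \bm{e}_{j,i}$.

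For the inductive step, I would assume $\tilde{\bm{x}}_{\pi(j)}^{(k-1)} = \bm{x}_j^{(k-1)}$ for all $j$ and examine the update (\ref{eq:mpgnn}) at node $\pi(i)$ in the permuted graph. Combining the inductive hypothesis with the two structural facts, the multiset fed into $\phi^{(k)}$ at $\pi(i)$ satisfies
\begin{align*}
\left\{\left[\tilde{\bm{x}}_{\pi(j)}^{(k-1)}, \tilde{\bm{e}}_{\pi(j),\pi(i)}\right] : \pi(j) \in \mathcal{N}(\pi(i))\right\} = \left\{\left[\bm{x}_j^{(k-1)}, \bm{e}_{j,i}\right] : j \in \mathcal{N}(i)\right\},
\end{align*}
i.e., it is exactly the multiset fed into $\phi^{(k)}$ at $i$ in the original graph. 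Since the first argument of $\alpha^{(k)}$ also matches by the inductive hypothesis, applying the identical functions $\phi^{(k)}$ and $\alpha^{(k)}$ yields $\tilde{\bm{x}}_{\pi(i)}^{(k)} = \bm{x}_i^{(k)}$, closing the induction.

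The crux of the argument, and the one place deserving care, is the multiset equality displayed above. It relies essentially on $\phi^{(k)}$ being a function of a \emph{multiset} rather than an ordered tuple, so that the permutation $\pi$, which reorders and relabels the neighbors of $i$, leaves the aggregated value unchanged. This is precisely why the paper's convention that $\{\cdot\}$ denotes a multiset is indispensable: an ordering-sensitive aggregator would break equivariance. The remaining steps, namely the base case and the bookkeeping of neighborhoods and edge features, follow directly from the definitions of $\pi \star \bm{Z}$ and $\pi \star A$ and require no further machinery.
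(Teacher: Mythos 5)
Your proof is correct and follows essentially the same route as the paper's: an induction on the layer index establishing $\tilde{\bm{x}}_{\pi(i)}^{(k)} = \bm{x}_i^{(k)}$ via the base case $(\pi \star \bm{Z})_{(\pi(i),:)} = \bm{Z}_{(i,:)}$ and the relabeling facts for edge features and neighborhoods, then translating the node-wise identity at layer $S$ into the matrix statement. Your added observation that the multiset nature of $\phi^{(k)}$ is the load-bearing ingredient is a nice explicit articulation of what the paper's proof uses implicitly.
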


\begin{rem}
 Please refer to Appendix \ref{proof:pe_gnn} for a detailed proof.
\end{rem}

The permutation equivariance property of GNNs improves the generalization of the neural networks. It also reduces the training sample complexity and training time. As shown in Proposition \ref{prop:pe_cg}, the radio resource management problems enjoy a permutation equivariant property. This means that the near-optimal solutions to a permuted problem are permutations of those to the original problem. GNNs well respect this property while MLPs and CNNs do not. If GNNs can perform well with a specific input, the good generalization is guaranteed with the permutation of this input, which is not guaranteed by MLPs or CNNs. Thus, in radio resource management problems, GNNs enjoy a better generalization than MLPs and CNNs. In contrast, to keep the input-output mapping in MLPs or CNNs permutation equivariant, data argumentation is needed. In principle, for each training sample, all its permutations should be put into the training dataset. This leads to a higher training sample complexity and more training time for MLPs and CNNs.

\paragraph{Ability to generalize to different problem scales} In MLPs, the input or output size must be the same during training and testing. Hence, the number of agents in the test dataset must be equal or less than that in the training dataset \cite{sun2018learning}. This means that MLP based methods cannot be directly applied to a different problem size. In MPGNNs, each node has a copy of two sub neural networks, i.e., $\alpha^{(k)}(\cdot)$ and $\phi^{(k)}(\cdot)$, whose input-output dimensions are invariant with the number of agents. Thus, we can train MPGNNs on small-scale problems and apply them to large-scale problems.

\paragraph{Fewer training samples} The required number of training samples for MPGNNs is much smaller than that for MLPs. The first reason is training sample reusing.
Note that the neural network at each node is identical. For each training sample, each node receives a permuted version of it and processes it with $\alpha^{(k)}(\cdot)$ and $\phi^{(k)}(\cdot)$.  Thus, each training sample is reused $L$ times for training $\{\alpha^{(k)}\}$ and $\{\phi^{(k)}\}$, where $L$ is the number of nodes. Second, input and output dimensions of the aggregation and combination functions in MPGNNs are much smaller than the original problem, which allows the use of much fewer parameters in neural networks. 

\paragraph{High computational efficiency} In each layer, an aggregation function is applied to all the edges and a combination function is applied to all the nodes. Thus, the time complexity for each layer is $\mathcal{O}(|E|+|V|)$ and the overall time complexity for an $L$-layer MPGNN is $\mathcal{O}(L(|E|+|V|))$. The time complexity grows \emph{linearly} with the number of agents when the maximal degree of the graph is bounded. Note that in MPGNNs, the aggregation function and combination function on each node can be executed in parallel. When the MPGNNs are fully parallelized, e.g., on powerful GPUs, the time complexity is $\mathcal{O}\left(LD\right)$, where $D$ is the maximal degree of the graph. This is a \emph{constant} time complexity when the maximal degree of the graph is bounded. We will verify this observation via simulations in Fig. \ref{fig:time}.

\subsection{An Effective Implementation of MPGNNs}

In this subsection, we propose an effective implementation of MPGNNs for radio resource management problems, named, the \emph{wireless channel graph convolution network} (WCGCN), which is able to effectively incorporate both agent-related features and channel-related features. The design space for MPGNNs (\ref{eq:mpgnn}) is to choose the set aggregation function $\phi(\cdot)$ and the combination function $\alpha(\cdot)$.

As general set functions are difficult to implement, an efficient implementation of $\phi(\cdot)$ was proposed in \cite{fey2019fast}, which has the following form
\begin{align*}
	\phi(\{x_1,\cdots, x_n\}) = \psi (\{ h(x_1), \cdots, h(x_n) \}),
\end{align*}
where $x_i, 1 \leq i \leq n$ are the elements in the set, $ \psi(\cdot)$ is a simple function, e.g., max or sum, and $h(\cdot)$ is some existing neural network architecture, e.g., linear mappings or MLPs. For $\alpha(\cdot)$ and $h(\cdot)$, linear mapping is adopted in popular GNN architectures (e.g., GCN \cite{kipf2016semi} and S2V \cite{dai2016discriminative}). Nevertheless, as discussed in Section IV in \cite{lee2019graph}, linear mappings have difficulty handling continuous features, which is ubiquitous in wireless networks (e.g., CSI). We adopt MLPs as $\alpha(\cdot)$ and $h(\cdot)$ for their approximation ability \cite{hornik1989multilayer}. MLP processing unit enables WCGCN to exploit complicated agent-related features and channel-related features in wireless networks.

For the aggregation function $\psi(\cdot)$, we notice that the following property holds if we use $\psi(\cdot) = \text{MAX}(\cdot)$.
 
\begin{thm} (Robustness to feature corruptions) \cite{qi2017pointnet} \label{thm:robust}
	Suppose $\bm{u}:\mathcal{X} \rightarrow \mathbb{R}^p$ such that $\bm{u}=\text{MAX}_{x_i \in S }$ and $f=\gamma \circ \bm{u}$. Then, 
	
	(a) $\forall S, \exists \mathcal{C}_S, \mathcal{N}_S \in \mathcal{X}, f(T) = f(S) \text{ if } \mathcal{C}_S \subset T \subset \mathcal{N}_S$;
	
	(b) $|\mathcal{C}_S| \leq p$.
\end{thm}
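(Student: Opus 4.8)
The plan is to exploit the single structural feature that drives the whole result: each of the $p$ coordinates of $\bm{u}(S)$ is an \emph{independent} coordinate-wise maximum over the transformed elements of $S$, i.e. $\bm{u}(S)_j = \max_{x \in S} h(x)_j$. Since a maximum over a finite set is attained, for every coordinate $j \in \{1,\dots,p\}$ I would pick an element $x^{(j)} \in S$ realizing this maximum, and collect the maximizers into
\begin{align*}
\mathcal{C}_S = \{x^{(1)}, x^{(2)}, \dots, x^{(p)}\} \subseteq S.
\end{align*}
Because this set contains at most one new element per coordinate (with repetitions when a single point attains the max in several coordinates), it has at most $p$ distinct points, which is exactly claim (b). By construction $\bm{u}(\mathcal{C}_S) = \bm{u}(S)$: the set $\mathcal{C}_S$ already attains every coordinate-wise max, and being a subset of $S$ it cannot exceed it in any coordinate.

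For claim (a) the remaining task is to identify the \emph{largest} set whose points can be added without perturbing the pooled vector. The natural candidate is the coordinate-wise ``sublevel box'' of the current maximum,
\begin{align*}
\mathcal{N}_S = \{\, x \in \mathcal{X} : h(x)_j \le \bm{u}(S)_j \text{ for all } j \in \{1,\dots,p\} \,\}.
\end{align*}
I would first verify the chain $\mathcal{C}_S \subseteq S \subseteq \mathcal{N}_S$: the left inclusion is immediate, and the right one holds because every $x \in S$ trivially satisfies $h(x)_j \le \max_{x' \in S} h(x')_j = \bm{u}(S)_j$.

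The core of the argument is then a sandwich (monotonicity) estimate for any $T$ with $\mathcal{C}_S \subseteq T \subseteq \mathcal{N}_S$. Using that the coordinate-wise max is monotone under set inclusion, the inclusion $\mathcal{C}_S \subseteq T$ gives $\bm{u}(T)_j \ge \bm{u}(\mathcal{C}_S)_j = \bm{u}(S)_j$ for every $j$, while $T \subseteq \mathcal{N}_S$ forces $h(x)_j \le \bm{u}(S)_j$ for every $x \in T$ and hence $\bm{u}(T)_j \le \bm{u}(S)_j$. Combining the two bounds yields $\bm{u}(T) = \bm{u}(S)$, and applying $\gamma$ to both sides gives $f(T) = \gamma(\bm{u}(T)) = \gamma(\bm{u}(S)) = f(S)$, establishing (a).

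The individual steps are routine once the right sets are written down, so I do not anticipate a genuinely hard obstacle; the only point demanding care is the \emph{definition} of $\mathcal{N}_S$. Choosing it too small (e.g. $\mathcal{N}_S = S$) would weaken the conclusion to a vacuous statement, whereas the sublevel-box choice above is precisely the maximal set for which both inequalities in the sandwich remain valid simultaneously. It is therefore the choice that makes the monotonicity argument close, and recognizing that $\mathcal{N}_S$ should be a preimage of a coordinate-wise box rather than a fixed enlargement of $S$ is the one conceptual step of the proof.
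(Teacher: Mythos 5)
Your proof is correct, and it is essentially the standard argument from the cited source \cite{qi2017pointnet}: this paper does not prove Theorem \ref{thm:robust} itself but imports it from PointNet, whose proof uses exactly your construction --- the critical set $\mathcal{C}_S$ of per-coordinate maximizers (at most one per output dimension, giving $|\mathcal{C}_S| \leq p$) and the coordinate-wise sublevel set $\mathcal{N}_S = \{x : h(x)_j \leq \bm{u}(S)_j, \forall j\}$, followed by the monotonicity sandwich $\bm{u}(\mathcal{C}_S) \leq \bm{u}(T) \leq \bm{u}(S) = \bm{u}(\mathcal{C}_S)$. You also correctly reconstructed the implicit per-element transformation $h$ that the paper's statement omits in writing $\bm{u} = \text{MAX}_{x_i \in S}$, so no gap remains.
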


Theorem \ref{thm:robust} states that $f(S)$ remains the same up to corruptions of the input if all the features in $\mathcal{C}_S$ are preserved and $\mathcal{C}_S$ only contains a limited number of features, which is smaller than $p$. By specifying it to problems in wireless networks, the output of a layer remains unchanged even when the CSI is heavily corrupted on some links. In other words, it is robust to missing CSI.

We next specify the architecture for the WCGCN, which aligns with traditional optimization algorithms. First, in traditional optimization algorithms, each iteration outputs an updated version of the optimization variables. In the WCGCN, each layer outputs an updated version of the optimization variables. Second, these algorithms are often time-invariant systems, e.g., gradient descent, WMMSE \cite{Shi2011An}, and FPlinQ \cite{shen2017fplinq}. Thus, we share weights among different layers of the WCGCN. The update of the $i$-th node in the $k$-th layer can be rewritten as
\begin{equation}\label{eq:cgcnet}
\begin{aligned}
\bm{y}_i^{(k)} &= \text{MLP2}\left(\bm{x}_i^{(k-1)},  \text{MAX}_{j \in \mathcal{N}(i)}\left\{\text{MLP1} \left(\bm{x}_j^{(k-1)},\bm{e}_{j,i}\right)\right\}    \right),\\
\bm{x}_i^{(k)} &= \beta\left(\bm{y}_i^{(k)}\right),
\end{aligned}
\end{equation}
where $\text{MLP1}$ and $\text{MLP2}$ are two different MLPs, $\beta$ is a differentiable normalization function that depends on applications, $\bm{y}_i^{(k)}$ denotes the output of MLP2 of the $i$-th node in the $k$-th layer, $\bm{x}_i^{(k)}$ denotes the hidden state, and $\mathcal{N}(i)$ denotes the set of neighbor nodes of node $i$. For the power control problem, we constrain the power between $0$ and $1$, and $\beta$ can be a sigmoid function, i.e., $\beta(x) = \frac{1}{1+\exp(-x)}$. For more general constraints, $\beta$ can be differentiable projection layers \cite{agrawal2019differentiable}.

Besides the benign properties of MPGNNs, WCGCN enjoys several desirable properties for solving large-scale radio resource management problems. First, the WCGCN can effectively exploit features in multi-antenna systems with heterogeneous agents (e.g., channel states in multi-antenna systems and users' weights in weighted sum rate maximization). This is because WCGCN adopts MLP as processing units instead of linear mappings. This enables it to solve a wider class of radio resource management tasks than existing works \cite{cui2018spatial,lee2019graph,eisen2019optimal} (e.g., beamforming problems and weighted sum rate maximization). Second, it is robust to partial and imperfect CSI as suggested in Theorem \ref{thm:robust}.

\section{Theoretical Analysis of MPGNN-based Radio Resource Management} \label{sec:analysis}
In this section, we investigate performance and generalization of MPGNNs. We first prove the equivalence between MPGNNs and a family of distributed algorithms, which include many classic algorithms for radio resource management as special examples, e.g., WMMSE \cite{Shi2011An}. Based on this observation, we analyze the performance of MPGNN-based methods for weighted sum rate maximization problem.

\subsection{Simplifications}
To provide theoretical guarantees for ``learning to optimize'' approaches for solving radio resource management problems, it is critical to understand the performance and generalization of neural network-based methods. Unfortunately, the training and generalization of neural networks are sill open problems. We make several commonly adopted simplifications to make the performance analysis tractable. First, we focus on the MPGNN class instead of any specific neural network architecture such as GCNs.
Following Lemma 5 and Corollary 6 in \cite{xu2018powerful}, we can design an MPGNN with MLP processing units as powerful as the MPGNN class, and thus this simplification well serves our purpose. Second, we target at proving the existence of an MPGNN with performance guarantee. Because we train the neural network with a stochastic gradient descent with limited training samples during the simulations, we may not find the corresponding neural network parameters. While this may leave some gap between the theory and practice, our result is an important first step. These two simplifications have been commonly adopted in the performance analysis of GNNs \cite{xu2018powerful,sato2019approximation,barcelo2019logical}.

\subsection{Equivalence of MPGNNs and Distributed Optimization}
Compared with the neural network-based radio resource management, optimization-based radio resource management has been well studied. Thus, it is desirable to make connections between these two types of methods. In \cite{sato2019approximation}, the equivalence between SGNNs in (\ref{gnns:SGNNs}) and graph optimization algorithms was proved. Based on this result, we shall establish the equivalence between MPGNNs and a family of distributed radio resource management algorithms.

We first give a brief introduction to \emph{distributed local algorithms}, following \cite{hella2015weak}. The maximal degree of the nodes in the graph is assumed to be bounded. Distributed local algorithms are a family of iterative algorithms in a multi-agent system. In each iteration, each agent sends messages to its neighbors, receives messages from its neighbors, and updates its state based on the received messages. The algorithm terminates after a constant number of iterations. 

We focus on a sub-class of distributed local algorithms, titled, \emph{multiset broadcasting distributed local algorithms} (MB-DLA) \cite{hella2015weak}, which include a wide range of radio resource management algorithms in wireless networks, e.g., DTP \cite{kubisch2003distributed}, WMMSE \cite{Shi2011An}, FPlinQ \cite{shen2017fplinq}, and first-order methods for network utility problems \cite{shi2015large}.  Multiset and broadcasting refer to the way for receiving and sending messages, respectively. Denote $\bm{x}^{(l)}_i$ as the state of node $i$ at the $l$-th iteration, and the MB-DLA is shown in Algorithm \ref{alg:mb-dla}.

\begin{algorithm}  
	\footnotesize
	\caption{Multiset broadcasting distributed local algorithm \cite{hella2015weak}}  
	\label{alg:mb-dla}  
	\begin{algorithmic}[1]
		\State Initialize all internal states $\bm{x}_k^{(0)}, \forall k$.
		\For{communication round $t=1,\cdots,T$}
		\State agent $k$ sends $h^{(t)}_1(\bm{x}^{(t-1)}_k)$ to all its edges, $\forall k$
		\State agent $k$ receives $\left\{\bm{m}_{j,k}^{(t)}|\bm{m}_{j,k}^{(t)} = h_2^{(t)}\left(h^{(t)}_1(\bm{x}^{(t-1)}_k),  \bm{e}_{j,k}\right), j\in \mathcal{N}(k)\right\}$ from the edges, $\forall k$
		\State agent $k$ updates its internal state $\bm{x}^{(t)}_k = g_2^{(t)}\left(\bm{x}^{(t-1)}_k,  g_1^{(t)}\left(\left\{\bm{m}^{(t)}_{j,k}: j \in \mathcal{N}(k)\right\}\right)\right)$, $\forall k$.
		\EndFor
		\State Output $\bm{x}_k^{(T)}$.
	\end{algorithmic}  
\end{algorithm}

The equivalence between MPGNNs and MB-DLAs roots in the similarity in their definitions. In each iteration of an MB-DLA, each agent aggregates messages from neighbor agents and updates its local state. In each layer of an MPGNN, each node aggregates features from neighbor nodes. The equivalence can be drawn if we view the agents as nodes in a graph and messages as the features. The following proposition states the equivalence of MPGNNs and MB-DLAs formally.

\begin{thm}  \label{thm:MPGNNMB}
	Let MB-DLA($T$) denote the family of MB-DLA with $T$ iterations and MPGNN($T$) as the family of MPGNNs with $T$ layers, then the following two conclusions hold.
	\begin{enumerate}
		\item For any MPGNN($T$), there exists a distributed local algorithm in MB-DLA($T$) that solves the same set of problems as MPGNN($T$).
		\item For any algorithm in MB-DLA($T$), there exists an MPGNN($T$) that solves the same set of problems as this algorithm.
	\end{enumerate}
	
\end{thm}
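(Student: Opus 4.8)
The plan is to prove both inclusions by an explicit, term-by-term identification of the functional components of the two families, exploiting the structural parallel already noted in the excerpt: one MPGNN layer (\ref{eq:mpgnn}) and one MB-DLA communication round (Algorithm \ref{alg:mb-dla}) both consist of a neighbor-aggregation step followed by a self-combination step. Since the theoretical classes allow the constituent maps ($\alpha^{(k)}, \phi^{(k)}$ for MPGNNs; $h_1^{(t)}, h_2^{(t)}, g_1^{(t)}, g_2^{(t)}$ for MB-DLAs) to be arbitrary functions, and both recursions start from the same initialization $\bm{x}_i^{(0)} = \bm{Z}_{(i,:)}$ and return $\bm{x}_i^{(T)}$, it suffices to show that on every fixed input $(\bm{Z}, A)$ the two updates can be made to produce identical node states at each step. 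Because the layer index $k$ and the round index $t$ are placed in one-to-one correspondence, the depth $T$ is preserved automatically, which is exactly what the statement asserts.

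For the first inclusion (MPGNN($T$) simulated by MB-DLA($T$)), given an MPGNN with maps $\alpha^{(k)}$ and set-aggregator $\phi^{(k)}$, I would build an MB-DLA by setting, for round $t=k$, the broadcast map $h_1^{(t)} = \mathrm{id}$, the edge map $h_2^{(t)}(x,e) = [x,e]$ (concatenation into one message vector), the multiset-aggregation $g_1^{(t)} = \phi^{(k)}$, and the combination $g_2^{(t)} = \alpha^{(k)}$. Then the message received by node $i$ from neighbor $j$ is exactly the pair $[\bm{x}_j^{(k-1)}, \bm{e}_{j,i}]$, the aggregate $g_1^{(t)}(\{\bm{m}_{j,i}^{(t)}\})$ equals $\phi^{(k)}(\{[\bm{x}_j^{(k-1)}, \bm{e}_{j,i}]\})$, and the update coincides with (\ref{eq:mpgnn}). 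An induction on $k$ then yields equality of all node states, hence of the outputs, on every instance. For the reverse inclusion I would go the other way: define $\alpha^{(k)} = g_2^{(t)}$ and define the pair-aggregator by composing the MB-DLA message pipeline, $\phi^{(k)}(\{[\bm{x}_j, \bm{e}_{j,i}]\}) := g_1^{(t)}(\{h_2^{(t)}(h_1^{(t)}(\bm{x}_j), \bm{e}_{j,i})\})$. The point to check is that this is a legitimate multiset function of the pairs: applying the per-element map $(\bm{x}_j, \bm{e}_{j,i}) \mapsto h_2^{(t)}(h_1^{(t)}(\bm{x}_j), \bm{e}_{j,i})$ to the multiset of pairs produces the multiset of messages, so $\phi^{(k)}$ is well-defined and permutation-invariant as an aggregator must be, and the same induction gives layer-by-layer equality.

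The two inductions are routine; the part that needs genuine care — and what makes this an extension of the SGNN-only equivalence of \cite{sato2019approximation} rather than a restatement — is threading the edge features through both schemas consistently. Concretely, I must confirm that the MB-DLA message, which pairs a transformed sender state with $\bm{e}_{j,i}$, carries enough information to realize an arbitrary pair-aggregator $\phi^{(k)}$ (handled by taking $h_2^{(t)}$ to be concatenation and pushing all nontriviality into $g_1^{(t)}$), and conversely that collapsing $\phi^{(k)}$ into the $h_2$-then-$g_1$ form retains dependence on both arguments. I would also state explicitly the reading of "solves the same set of problems" that makes the claim precise: equality of the induced input-output mappings $(\bm{Z}, A) \mapsto \{\bm{x}_i^{(T)}\}$ over all graphs of bounded maximal degree, so that the two families realize the identical class of functions and therefore attain the same approximation error on any target resource-management mapping.
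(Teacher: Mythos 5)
Your proposal is correct and takes essentially the same route as the paper's Appendix C: the identical identity-$h_1$/concatenation-$h_2$ construction with $g_1=\phi^{(k)}$, $g_2=\alpha^{(k)}$ for simulating an MPGNN by an MB-DLA, the composed pipeline $\phi^{(k)}(\{[\bm{x}_j,\bm{e}_{j,i}]\}) = g_1^{(t)}(\{h_2^{(t)}(h_1^{(t)}(\bm{x}_j),\bm{e}_{j,i})\})$ with $\alpha^{(k)}=g_2^{(t)}$ for the converse, and a round-by-round induction establishing state equality in both directions. Your per-element-map formulation of the converse aggregator is in fact slightly cleaner than the paper's, which enumerates multiset elements through ordering-based selection functions $\tau_i$ and coordinate projections $\chi_1,\chi_2$ (and whose displayed construction appears to swap the roles of $g_1$ and $g_2$, evidently a typo), but the underlying argument is the same.
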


\begin{rem}
	 Please refer to Appendix \ref{proof:MPGNNMB} for a detailed proof.
\end{rem}

The equivalence allows us to analyze the performance of MPGNNs by studying the performance of MB-DLAs. The first result shows that MPGNNs are \emph{at most} as powerful as MB-DLAs. The implication is that if we can prove that there is no MB-DLA capable of solving a specific radio resource management problem, then MPGNNs cannot solve it. This can be used to prove a performance upper bound of MPGNNs. The second result shows that MPGNNs are \emph{as powerful as} MB-DLAs in radio resource management problems. This implies that if we are able to identify an MB-DLA that solves a radio resource management problem well, then there exists an MPGNN performs better or at least competitive. The generalization is also as good as the corresponding MB-DLA. We shall give a specific example on sum rate maximization in the next subsection.

\subsection{Performance and Generalization of MPGNNs} \label{sec:per}
In this subsection, we use the tools developed in the last subsection to analyze the performance and generalization of MPGNNs in the sum rate maximization problem. The analysis is built on the observation that a classic algorithm for the sum rate maximization problem, i.e., WMMSE, is an MB-DLA under some conditions, which is formally stated below. We shall refer to the MB-DLA corresponding to WMMSE as WMMSE-DLA.

\begin{prop} \label{prop:wmmse}
	When the maximal number of interference neighbors is bounded by some constant, then WMMSE with a constant number of iterations is an MB-DLA. 
\end{prop}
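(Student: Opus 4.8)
The plan is to show that the WMMSE algorithm, when executed for a constant number of iterations on a $K$-user interference channel with bounded interference neighbors, fits precisely into the template of Algorithm \ref{alg:mb-dla}. Concretely, I would recall the standard WMMSE iteration for weighted sum rate maximization (as in \cite{Shi2011An}), which alternately updates three quantities at each transceiver pair $k$: the receive filter $u_k$, the MSE weight $w_k$, and the transmit variable $v_k$. The first step is to write down these update equations explicitly and examine exactly which quantities from other nodes each update depends on. The key structural observation to extract is that each WMMSE update at node $k$ depends only on (i) local variables at node $k$ from the previous iteration, and (ii) aggregated interference terms of the form $\sum_{j \neq k} |\bm{h}_{j,k}^H \bm{v}_j|^2$ (and analogous cross terms), which are sums over the interfering neighbors $j \in \mathcal{N}(k)$ of a function of the neighbor's state $\bm{v}_j$ and the edge feature $\bm{h}_{j,k} = A_{(j,k,:)}$.

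Next I would match these updates to the MB-DLA primitives. The plan is to identify the message that node $j$ broadcasts to its edges with $h_1^{(t)}(\bm{x}_j^{(t-1)})$, namely the relevant local beamforming/receive-filter state, and to identify the edge-processing function $h_2^{(t)}$ with the operation that combines this broadcast message with the edge feature $\bm{e}_{j,k}$ to form $\bm{h}_{j,k}^H \bm{v}_j$ (or the squared magnitude thereof). The aggregation $g_1^{(t)}$ is then realized by summation over the received multiset $\{\bm{m}_{j,k}^{(t)} : j \in \mathcal{N}(k)\}$, and the state update $g_2^{(t)}$ encodes the closed-form WMMSE formulas that produce the new filter, weight, and beamformer from the node's own previous state and the aggregated interference. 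I would package the triple $(u_k, w_k, v_k)$ together as the internal state $\bm{x}_k$ so that one ``communication round'' of the MB-DLA corresponds to one full WMMSE iteration; since WMMSE runs a constant number of iterations by hypothesis, the resulting algorithm terminates after a constant number $T$ of rounds, as required by the definition of a distributed local algorithm.

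Two conditions must be discharged to legitimately call this an MB-DLA. First, the \emph{bounded-degree} requirement: this is exactly where the hypothesis that the maximal number of interference neighbors is bounded by a constant enters, guaranteeing that each node sends and receives a bounded number of messages per round. Second, the \emph{locality and finiteness} of the number of rounds, which follows from restricting WMMSE to a constant number of iterations. I would also verify that summation over a multiset is a legitimate multiset-aggregation function $g_1$, which is immediate since addition is commutative and associative and hence well-defined on multisets.

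The main obstacle I anticipate is the apparent need for \emph{global} quantities in WMMSE, rather than purely neighbor-local ones. In particular, the interference sums run over $j \neq k$ in principle over all other users, and the naive receive-filter update may involve the full covariance $\sum_j \bm{h}_{j,k} \bm{h}_{j,k}^H |\cdots|$; I must argue that under the graph model of Section \ref{sec:graph}, where an edge is drawn only when transmitter $j$ and receiver $k$ are within distance $D$, the non-neighbor contributions vanish (channels outside the neighborhood are treated as negligible, i.e. $A_{(j,k,:)} = \bm{0}$), so each sum collapses to a sum over $\mathcal{N}(k)$ and is genuinely local. A secondary subtlety is that WMMSE's update of $w_k$ and $u_k$ depends on node $k$'s own quantities $\bm{h}_{k,k}$ and $\sigma_k^2$, which live in the node feature $\bm{Z}_{(k,:)}$ and are therefore available locally within $g_2^{(t)}$; I would confirm that no information from non-adjacent nodes leaks into any update. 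Once locality is established, the matching to Algorithm \ref{alg:mb-dla} is a direct term-by-term identification.
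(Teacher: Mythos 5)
Your overall strategy --- identify the broadcast with $h_1$, the edge processing with $h_2$, summation with $g_1$, the closed-form updates with $g_2$, and invoke bounded degree plus the distance-$D$ truncation for locality --- is the right skeleton and matches the paper's construction. But there is a genuine gap in your scheduling claim that ``one communication round of the MB-DLA corresponds to one full WMMSE iteration.'' A WMMSE iteration is Gauss--Seidel in structure: the receive filter $\bm{U}_{i_k}$ is computed from the neighbors' beamformers $\bm{V}_{l_j}$ of the previous iteration, but the beamformer update $\bm{V}_{i_k}$ then depends on the \emph{freshly updated} $\bm{U}_{l_j},\bm{W}_{l_j}$ of the interfering neighbors (its denominator aggregates $w_{l_j}\bm{H}^H_{l_j,k}\bm{U}_{l_j}\bm{W}_{l_j}\bm{U}^H_{l_j}\bm{H}_{l_j,k}$). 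In the MB-DLA template of Algorithm \ref{alg:mb-dla}, every message in round $t$ is a function of neighbor states at the end of round $t-1$, so the new $(\bm{U}_{l_j},\bm{W}_{l_j})$ cannot reach node $i_k$ within the same round; node $i_k$ also cannot recompute them itself, since that would require two-hop information. Packing $(\bm{U},\bm{W},\bm{V})$ into one state and updating all three per round either violates the template or silently replaces WMMSE by a Jacobi-style variant using stale filters, which is a different algorithm. The paper resolves this by simulating each WMMSE iteration with \emph{two} MB-DLA rounds: odd rounds broadcast $\bm{V}$ and update $(\bm{U},\bm{W})$ via messages $\bm{H}_{i_k,j}\bm{V}_{l_j}\bm{V}^H_{l_j}\bm{H}^H_{i_k,j}$; even rounds broadcast $(\bm{U},\bm{W})$ and update $\bm{V}$. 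This gives at most $2T$ rounds, still a constant, so the proposition survives --- but your proof as written does not.

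A secondary, fixable omission: the two phases use channels in \emph{opposite} directions. The filter update at node $i_k$ needs the interference channels $\bm{H}_{i_k,j}$ into its own receiver, while its beamformer update needs the channels $\bm{H}_{l_j,k}$ from its own transmitter into the neighbors' receivers. Your proposal sets $\bm{e}_{j,k} = \bm{h}_{j,k}$ only; the paper instead defines the edge feature as the pair $\bm{e}_{l_j,i_k} = [\bm{H}_{i_k,j},\bm{H}_{l_j,k}]$ so that both message types are computable by $h_2$ from the edge feature alone. You should also note that the even-round update includes the local projection step (the search for the multiplier $\mu^*_{i_k}$ enforcing the power constraint), which is legitimately absorbed into $g_2$ since it uses only locally available quantities.
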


\begin{rem}
	When the problem sizes in the training dataset and test dataset are the same, we can always assume that the number of interference neighbors is a common constant. The restriction of a constant number of interference neighbors only influences the generalization. Please refer to Appendix \ref{proof:wmmse} for a detailed proof.
\end{rem}

\paragraph{Performance} Proposition \ref{prop:wmmse} shows that WMMSE is an MB-DLA. Thus, when the problem sizes in the training dataset and test dataset are the same, there exists an MPGNN whose performance is as good as WMMSE. As the WMMSE is hand-crafted, it is not optimal in terms of the number of iterations. By employing a unsupervised loss function, we expect that MPGNNs can learn an algorithm which has fewer iterations and may possibly enjoy better performance. In Fig. \ref{fig:wmmse}, we observe that a $1$-layer MPGNN outperforms WMMSE with $10$ iterations and a $2$-layer MPGNN outperforms WMMSE with $30$ iterations.

\paragraph{Generalization} To avoid the excessive training cost, it is desirable to first train a neural network on small-scale problems and then generalize it to large-scale ones. An intriguing question is when such generalization is reliable. Compared with WMMSE, WMMSE-DLA has two constraints: Both the number of iterations and the maximal number of interference neighbors should be bounded by some constants. As agents that are far away cause little interference, the number of interference neighbors can be assumed to be fixed when the user density is kept the same. As a result, the performance of MPGNNs is stable when the user density in the test dataset is the user density in the training dataset multiplied by a constant. We will verify this by simulations in Table \ref{tab:wp_density} and Table \ref{tab:bf_density}.

\section{Simulation Results} \label{sec:exp}
In this section, we provide simulation results to verify the effectiveness of the proposed neural network architecture for three applications. The first application is sum rate maximization in a Gaussian interference channel, which is a classic application for deep learning-based methods. We use this application to compare the proposed method with MLP-based methods \cite{liang2018towards} and optimization-based methods \cite{Shi2011An}. The second application is weighted sum rate maximization, and the third application is beamformer design. The last two problems cannot be solved by existing methods in \cite{cui2018spatial,lee2019graph,eisen2019optimal}. 

For the neural network setting, we adopt a $3$-layer WCGCN, implemented by Pytorch Geometric \cite{fey2019fast}. During the training, the neural network takes channel states and users' weights as input and outputs the beamforming vector for each user. We apply the following loss function at the last layer of the neural network. 

\begin{align*}
\ell(\bm{\Theta}) = -\mathbb{E}\left(\sum_{k=1}^{K} w_{k} \log_2 \left(1+ \frac{|\bm{h}_{k,k}^H \bm{v}_{k}(\bm{\Theta})|^2}{ \sum_{j\neq k}^K |\bm{h}_{j,k}^H\bm{v}_{j}(\bm{\Theta})|^2 + \sigma_{k}^2} \right)\right),
\end{align*}
where $\bm{\Theta}$ denotes the weights of the neural network and the expectation is taken over all the channel realizations. By adopting this loss function, no labels are required and thus it is an unsupervised learning method. In the training stage, to optimize the neural network, we adopt the adam optimizer \cite{kingma2014adam} with a learning rate of $0.001$. In the test stage, the input of the neural network consists of the channel states and users' weights and the output of the neural network is the beamforming vector. The SGD (adam) optimizer is not needed in the test stage.

\subsection{Sum Rate Maximization}
We first consider the sum rate maximization problem in a single-antenna Gaussian interference channel. This problem is a special case of (\ref{eq:sys_mod}) with $N_t = 1$, $h_{j,k} \sim \mathcal{CN}(0,1), \forall j,k$, and $w_k=1,\forall k$.

We consider the following benchmarks for comparison.
\begin{itemize}
	\item \textbf{WMMSE} \cite{Shi2011An}: This is a classic optimization-based algorithm for sum utility maximization in MIMO interfering broadcast channels. We run WMMSE for $100$ iterations with random initialization.
	\item \textbf{WMMSE 100 times}: For each channel realization, we run WMMSE algorithm for $100$ times and take the best one as the performance. This is often used as an performance upper bound.
	\item \textbf{Strongest}: We find a fixed proportion of pairs with the largest channel gain $|h_{i,i}|$, and set the power of these pairs as $P_{\text{max}}$ while the power levels for remaining pairs are set to $0$. This is a simple baseline algorithm without any knowledge of interference links.
	\item \textbf{PCNet} \cite{liang2018towards}: PCNet is an MLP based method particularly designed for the sum rate maximization problem with single-antenna channels.
\end{itemize}

We use $10^{4}$ training samples for WCGCN and $10^{7}$ training samples for PCNet. For a specific parameter setting of WCGCN (\ref{eq:cgcnet}), we set the hidden units of MLP1 in (\ref{eq:cgcnet}) as $\{5,32,32\}$, MLP2 as $\{35,16,1\}$, and $\beta(\cdot)$ as sigmoid function.\footnote{The performance of WCGCN is not sensitive to the number of hidden units.} The performance of different methods is shown in Table \ref{tab:uw_g}. The SNR and number of users are kept the same in the training and test dataset. For all the tables shown in this section, the entries are (weighted) the sum rates achieved by different methods normalized by the sum rate of WMMSE. We see that both PCNet and WCGCN achieve near-optimal performance when the problem scale is small. As the problem scale becomes large, the performance of PCNet approaches Strongest. This shows that it can hardly learn any valuable information about interference links. Nevertheless, the performance of WCGCN is stable as the problem size increases. Thus, GNNs are more favorable than MLPs for medium-scale or large-scale problems.

\begin{table}[htb]
	
	\selectfont  
	\centering
	
	\caption{Average sum rate under each setting. The results are normalized by the sum rate of WMMSE.} 
	\newcommand{\tabincell}[2]{\begin{tabular}{@{}#1@{}}#2\end{tabular}}
	\resizebox{0.48\textwidth}{!}{
		\begin{tabular}{|c|c|c|c|c|c|}
			\hline
			SNR& Links  & WCGCN & PCNet  & Strongest & \tabincell{c}{WMMSE \\$100$ times}\cr \hline
			\multirow{3}{*}{0dB} & 10  &$100.0\%$ & $98.9\%$  & $87.1\%$ & $102.0\%$ \cr \cline{2-6} 
			& 30 & $97.9\%$ & $87.4\%$ & $82.8\%$&$101.3\%$ \cr \cline{2-6} 
			& 50 & $97.1\%$ & $79.7\%$  & $80.6\%$&$101.4\%$ \cr \hline
			\multirow{3}{*}{10dB} & 10 & $103.1\%$ & $101.8\%$ & $74.4\%$ & $104.8\%$ \cr \cline{2-6} 
			& 30 & $103.4\%$ & $74.0\%$ & $70.0\%$& $105.0\%$ \cr \cline{2-6} 
			& 50 & $102.5\%$ & $67.0\%$ & $68.9\%$&$104.7\%$ \cr \hline
	\end{tabular}}
	\label{tab:uw_g}
\end{table}

We further compare the performance of WCGCN and WMMSE with different numbers of iterations. We use the system setting $K=50$, $\text{SNR}=10\text{dB}$. Both WMMSE and WCGCN starts from the same initialization point. The results are shown in Fig. \ref{fig:wmmse}. From the figure, we see that a $1$-layer WCGCN outperforms WMMSE with $10$ iterations and a $2$-layer WCGCN outperforms WMMSE with $30$ iterations. This indicates that by adopting the unsupervised loss function, WCGCN can learn a much better message-passing algorithm than the handcrafted WMMSE.

\begin{figure}[htb]
	\centering
	\includegraphics[width=0.5\textwidth]{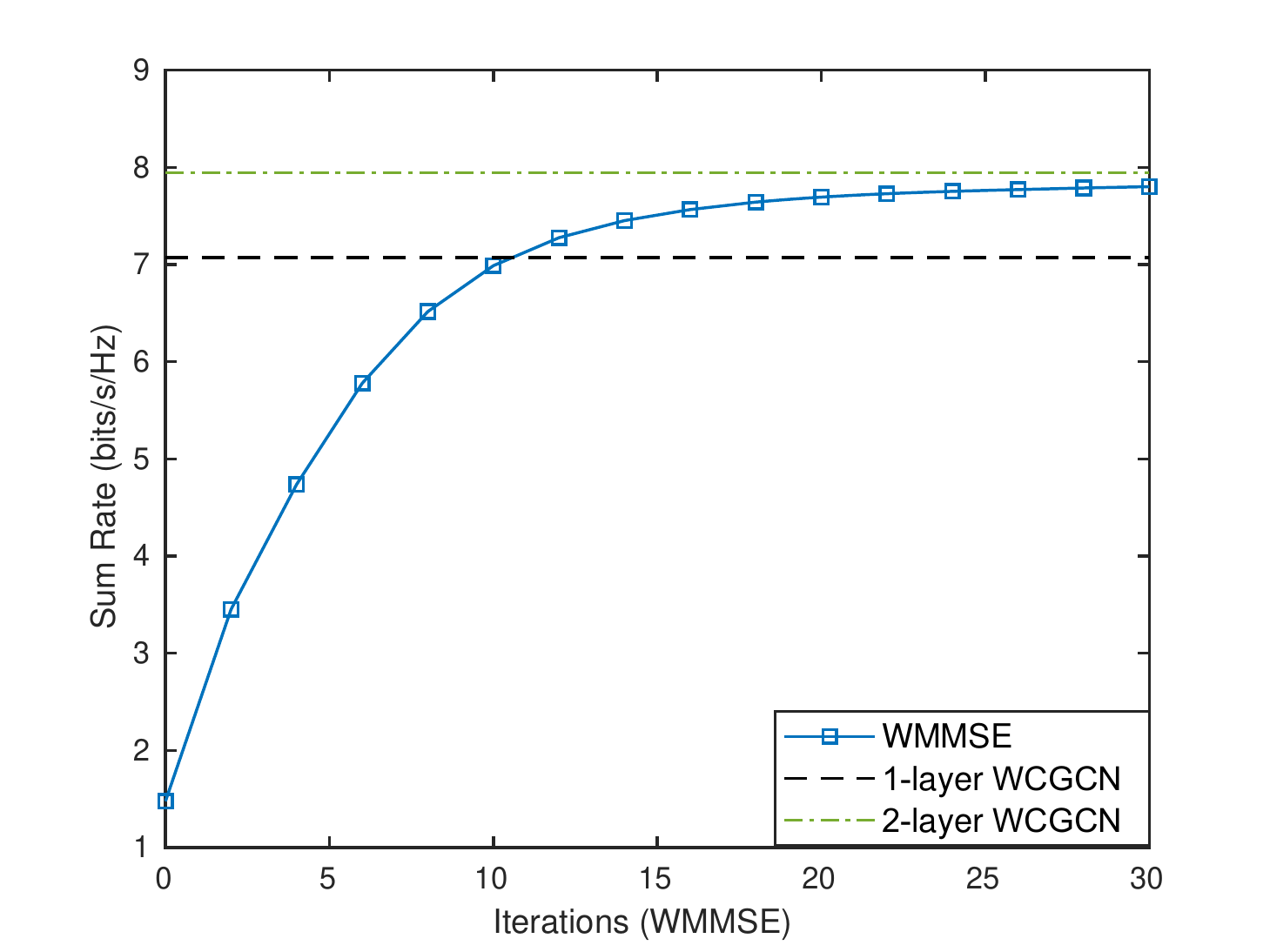}
	\caption{A comparison between WCGCN and WMMSE with different numbers of iterations.}
	\label{fig:wmmse}
\end{figure}

\subsection{Weighted Sum Rate Maximization}
In this application, we consider $K$ single-antenna transceiver pairs  within a $A \times A$ area. The transmitters are randomly located in the $A \times A$ area while each receiver is uniformly distributed within $[d_{\text{min}},d_{\text{max}}]$ from the corresponding transmitter. We adopt the channel model from \cite{shi2014group} and use $10000$ training samples for each setting. To reduce the CSI training overhead, we assume $h_{j,k}$ is available to WCGCN only if the distance between transmitter $j$ and receiver $k$ is within $500$ meters. To provide a performance upper bound, global CSI is assumed to be available to WMMSE. The weights for weighted sum rate maximization, i.e., $w_k$ in (\ref{eq:sys_mod}), are generated from a uniform distribution in $[0,1]$ in both training and test dataset. For a specific parameter setting of WCGCN (\ref{eq:cgcnet}), we set the hidden units of MLP1 as $\{5,32,32\}$, MLP2 as $\{35,16,1\}$, and $\beta(\cdot)$ as sigmoid function.

\paragraph{Performance comparison} We first test the performance of WCGCN when the number of pairs is the same in the training and test dataset. Specifically, we consider $K=50$ pairs in a $1000\text{m} \times 1000\text{m}$ region. We test the performance of WCGCN with different values of $d_{\text{min}}$ and $d_{\text{max}}$, as shown in Table \ref{tab:wp_same}. The entries in the table are the sum rates achieved by different methods. We observe that WCGCN with local CSI achieves competitive performance to WMMSE with global CSI.

\begin{table}[htb]
	
	\selectfont  
	\centering
	
	\caption{Average sum rate performance of $50$ transceiver pairs.} 
	
	\resizebox{0.48\textwidth}{!}{
		\begin{tabular}{|c|c|c|c|c|c|c|}  
			\hline  
			$(d_{\text{min}}, d_{\text{max}})$&   (2m,65m)        &    (10m,50m)       &     (30m,70m)      &      (30m,30m)     \\ \hline
			WCGCN &      $97.8\%$     &      $97.5\%$      &       $96.5\%$      &      $96.8\%$     \\ \hline

	\end{tabular}}
	\label{tab:wp_same}
\end{table}

Next, to test the generalization capability of the proposed method, we train WCGCN on a wireless network with tens of users and test it on wireless networks with hundreds or thousands of users, as shown in the following two simulations.

\paragraph{Generalization to larger scales}   We first train the WCGCN with $50$ pairs in a $1000\text{m} \times 1000\text{m}$ region. We then change the number of pairs in the test set while the density of users (i.e.,  $A^2/K$)  is fixed. The results are shown in Table \ref{tab:wp_scale}. It can be observed that the performance is stable as the number of users increases. It also  shows that WCGCN can well generalize to larger problem scales, which is consistent with our analysis. 

\begin{table}[htb]
	
	\selectfont  
	\centering
	
	\caption{Generalization to larger problem scales but same density.} 
	
	\resizebox{0.4\textwidth}{!}{
		\begin{tabular}{|c|c|c|c|}
			\hline
		Links	& Size ($m^2$) & \multicolumn{2}{c|}{$(d_{\text{min} },d_\text{max})$} \\ \hline
 	&  &        (10m,50m)     &    (30m,30m)       \\ \hline
		$200$	& $2000 \times 2000$ &       $98.3\%$    &   $98.1\%$        \\ \hline
		$400$	& $2828 \times 2828 $  &     $98.9\%$      &   $98.2\%$        \\ \hline
		$600$	& $3464 \times 3464$ &      $98.8\%$      &     $98.7\%$        \\ \hline
		$800$	& $4000 \times 4000$ &      $98.9\%$     &     $98.6\%$      \\ \hline
		$1000$	& $4772 \times 4772$  &   $98.9\%$        &   $98.7\%$        \\ \hline
	\end{tabular}}
	\label{tab:wp_scale}
\end{table}

\paragraph{Generalization to higher densities}  In this test, we first train the WCGCN with $50$ pairs in a $1000\text{m} \times 1000\text{m}$ region. We then change the number of pairs in the test set while fixing the area size. The results are shown in Table \ref{tab:wp_density} and the performance loss compared with $K=50$ is shown in the bracket. The performance is stable up to a $4$-fold increase in the density, and good performance is achieved even when there is a $10$-fold increase in the density.

\begin{table}[htb]
	
	\selectfont  
	\centering
	
	\caption{Generalization over different link densities. The performance loss compared to $K=50$ is shown in the bracket.} 
	
	\resizebox{0.48\textwidth}{!}{
		\begin{tabular}{|c|c|c|c|}
			\hline
			Links	& Size ($m^2$) & \multicolumn{2}{c|}{$(d_{\text{min} },d_\text{max})$} \\ \hline
			&  &        (10m,50m)     &    (30m,30m)       \\ \hline
			$100$& \multirow{5}{*}{$1000 \times 1000$}  &  $97.6\%$ ($+0.1\%$)         &     $96.4\%$ ($-0.1\%$)      \\ \cline{1-1} \cline{3-4} 
			$200$&                    &     $97.0\%$ ($-0.5\%$)      &      $96.0\%$ ($-0.5\%$)     \\ \cline{1-1} \cline{3-4} 
			$300$&                    &     $95.9\%$ ($-1.6\%$)      &      $94.9\%$ ($-1.6\%$)     \\ \cline{1-1} \cline{3-4} 
			$400$&                    &     $95.6\%$  ($-1.9\%$)      &     $94.5\%$ ($-2.0\%$)      \\ \cline{1-1} \cline{3-4} 
			$500$&                    &      $95.3\%$ ($-2.2\%$)     &      $94.5\%$ ($-2.0\%$)     \\ \hline
	\end{tabular}}
	\label{tab:wp_density}
\end{table}

\subsection{Beamformer Design} \label{sec:ex_bf}
In this subsection, we consider the beamforming for sum rate maximization in (\ref{eq:sys_mod}). Specifically, we consider $K$ transceiver pairs within a $A \times A$ area, where the transmitters are equipped with multiple antennas and each receiver is equipped with a single antenna. The transmitters are generated uniformly in the area and the receivers are generated uniformly within $[d_{\text{min}},d_{\text{max}}]$ from the corresponding transmitters. We adopt the channel model in \cite{shi2014group} and use $50000$ training samples for each setting. The assumption of the available CSI for WCGCN and WMMSE is the same as the previous subsection. In WCGCN, a complex number is treated as two real numbers. For a specific parameter setting of WCGCN (\ref{eq:cgcnet}), we set the hidden units of MLP1 as $\{6N_t, 64, 64\}$, MLP2 as $\{64+4N_t, 32, 2N_t\}$, and $\beta(\bm{x}) = \frac{\bm{x}}{\text{max}(\|\bm{x}\|_2,1)}$. 

\paragraph{Performance comparison} We first test the performance of WCGCN when the number of pairs in the training dataset and the number of pairs in the test dataset are the same. Specifically, we consider $K=50$ pairs in a $1000$ meters by $1000$ meters region and each transmitter is equipped with $2$ antennas. We test the performance of WCGCN with different $d_{\text{min}}$ and $d_{\text{max}}$. The results are shown in Table \ref{tab:bf_same}. We observe that WCGCN achieves comparable performance to WMMSE with local CSI, demonstrating the applicability of the proposed method to multi-antenna systems.

\begin{table}[htb]
	
	\selectfont  
	\centering
	
	\caption{Average sum rate performance of $50$ transceiver pairs with $N_t = 2$. The results are normalized by the sum rate of WMMSE.} 
	
	\resizebox{0.48\textwidth}{!}{
		\begin{tabular}{|c|c|c|c|c|c|c|}  
			\hline  
			$(d_{\text{min}}, d_{\text{max}})$&   (2m,65m)        &    (10m,50m)       &     (30m,70m)      &      (30m,30m)     \\ \hline
			WCGCN &      $97.1\%$     &      $96.0\%$      &       $94.1\%$      &      $96.2\%$     \\ \hline

	\end{tabular}}
	\label{tab:bf_same}
\end{table}

\paragraph{Generalization to larger scales}  We first train the WCGCN with $50$ pairs in a $1000$ meters by $1000$ meters region with $N_t = 2$. We then change the number of pairs while the density of users (i.e.,  $A^2/K$)  is fixed. The results are shown in Table \ref{tab:bf_scale}. The performance is stable as the number of users increases, which is consistent with our theoretical analysis.

\begin{table}[htb]
	
	\selectfont  
	\centering
	
	\caption{Generalization to larger problem scales but same density.} 
	
	\resizebox{0.4\textwidth}{!}{
		\begin{tabular}{|c|c|c|c|}
			\hline
			Links	& Size ($m^2$) & \multicolumn{2}{c|}{$(d_{\text{min} },d_\text{max})$} \\ \hline
			&  &        (2m,65m)     &    (10m,50m)       \\ \hline
			$200$	& $2000 \times 2000$ &       $97.3\%$    &  $96.8\%$       \\ \hline
			$400$	& $2828 \times 2828 $  &     $97.3\%$      &   $96.7\%$        \\ \hline
			$600$	& $3464 \times 3464$ &      $97.2\%$      &    $96.5\%$        \\ \hline
			$800$	& $4000 \times 4000$ &      $97.2\%$    &    $96.5\%$     \\ \hline
			$1000$	& $4772 \times 4772$  &   $97.2\%$        &   $96.4\%$        \\ \hline
	\end{tabular}}
	\label{tab:bf_scale}
\end{table}

\paragraph{Generalization to larger densities}  We first train the WCGCN with $50$ pairs on a $1000$ meters by $1000$ meters region with $N_t = 2$. We then change the number of pairs while fix the area size. The results are shown in Table \ref{tab:bf_density} and the performance loss is shown in the bracket. The performance is stable up to a $2$-fold increase in the density and satisfactory performance is achieved up to a $4$-fold increase in the density.  The performance deteriorates when the density grows, which indicates that extra training is needed when the density in the test dataset is much larger than that of the training dataset.

\begin{table}[htb]
	
	\selectfont  
	\centering
	
	\caption{Generalization over different link densities. The performance loss compared to $K=50$ is shown in the bracket.} 
	
	\resizebox{0.48\textwidth}{!}{
		\begin{tabular}{|c|c|c|c|}
			\hline
			Links	& Size ($m^2$) & \multicolumn{2}{c|}{$(d_{\text{min} },d_\text{max})$} \\ \hline
			&  &        (10m,50m)     &    (30m,30m)       \\ \hline
			$100$& \multirow{5}{*}{$1000 \times 1000$}  &  $97.0\%$ ($-0.1\%$)         &    $95.7\%$ ($-0.3\%$)      \\ \cline{1-1} \cline{3-4} 
			$200$&                    &     $95.8\%$($-1.3\%$)      &      $94.4\%$ ($-1.6\%$)     \\ \cline{1-1} \cline{3-4} 
			$300$&                    &     $94.5\%$ ($-2.6\%$)      &      $93.0\%$ ($-3.0\%$)     \\ \cline{1-1} \cline{3-4} 
			$400$&                    &     $92.5\%$ ($-4.6\%$)      &     $92.0\%$ ($-4.0\%$)      \\ \cline{1-1} \cline{3-4} 
			$500$&                    &      $91.4\%$ ($-5.7\%$)     &      $90.7\%$($-5.3\%$)     \\ \hline
	\end{tabular}}
	\label{tab:bf_density}
\end{table}

\begin{figure}[htb]
	\centering
	\includegraphics[width=0.5\textwidth]{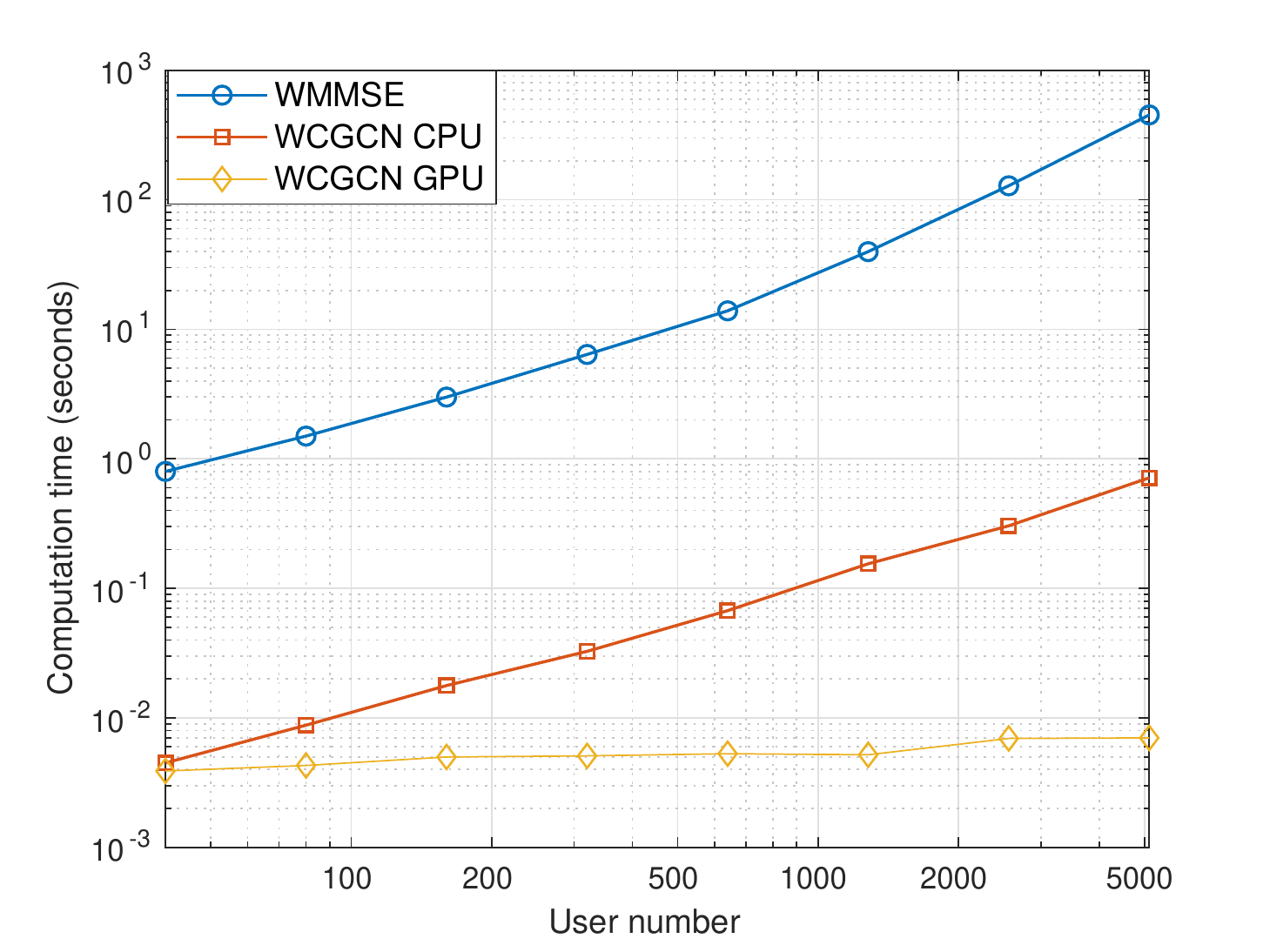}
	\caption{Computation time comparison of different methods.}
	\label{fig:time}
\end{figure}

\paragraph{Computation time comparison} This test compares the running time of different methods for different problem scales. We run ``WCGCN GPU'' on GeForce GTX 1080Ti while the other methods on Intel(R) Xeon(R) CPU E5-2643 v4 @ 3.40GHz. The implementation of neural networks exploits the parallel computation of GPU while WMMSE is not able to do so due to its sequential computation flows. The running time is averaged over $50$ problem instances and shown in Fig. \ref{fig:time}. The speedup compared with WMMSE becomes large as the problem scale increases. This benefits from the low computational complexity of WCGCN. As shown in the figure, the computational complexity of WCGCN CPU is linear and WCGCN GPU is nearly a constant, which is consistent with our analysis in Section \ref{sec:property}. Remarkably, WCGCN is able to solve the problem with $1000$ users within $6$ milliseconds.

\section{Conclusions}
In this paper, we developed a scalable neural network architecture based on GNNs to solve radio resource management problems. In contrast to existing learning based methods, we focused on the neural architecture design to meet the key performance requirements, including low training cost, high computational efficiency, and good generalization. Moreover, we theoretically connected learning based methods and optimization based methods, which casts light on the performance guarantee of learning to optimize approaches. We believe that this investigation will lead to profound implications in both theoretical and practical aspects. As for future directions, it will be interesting to investigate the distributed deployment of MPGNNs for radio resource management in wireless networks, and extend our theoretical results to more general application scenarios.

 \appendices 
 \section{Proof of Proposition \ref{prop:pe_dg}} \label{proof:pe_dg}
 Following Proposition \ref{prop:pi_dg}, we have
 \begin{align}\label{eq:pi}
 g(\bm{\Gamma},\bm{A}) = g(\pi \star \bm{\Gamma},\pi \star \bm{A}),  \quad Q(\bm{\Gamma},\bm{A}) = Q(\pi \star \bm{\Gamma},\pi \star \bm{A}),
 \end{align}
 for any variable $\bm{\Gamma}$, adjacency matrix $\bm{A}$, and permutation matrix $\bm{P}$.
 
 For any $\bm{\Gamma} \in \mathcal{R}^{\alpha}_g$, we have 
 \begin{align} \label{eq:sublevel}
 &g(\pi \star \bm{\Gamma},\pi \star \bm{A}) = g(\bm{\Gamma},\bm{A}) \leq \alpha, \notag \\ 
 & Q(\pi \star \bm{\Gamma},\pi \star \bm{A}) = Q(\bm{\Gamma},\bm{A}) \leq 0.
 \end{align}
 Combining (\ref{eq:pi}) and (\ref{eq:sublevel}), we have
 \begin{align*}
 &F(\pi \star \bm{A}) = \{\bm{\Gamma}|g(\bm{\Gamma},\pi \star \bm{A}) \leq \alpha, Q(\bm{\Gamma},\pi \star \bm{A})\leq 0  \} \\ 
 = &\{\pi \star \bm{\Gamma}|g(\pi \star \bm{\Gamma},\pi \star \bm{A}) \leq \alpha, Q(\pi \star \bm{\Gamma},\pi \star \bm{A})\leq 0  \} \\
 = &\{\pi \star \bm{\Gamma}|g(\bm{\Gamma},\bm{A}) \leq \alpha, Q(\bm{\Gamma},\bm{A})\leq 0  \} \\
 = &\{\pi \star \bm{\Gamma}|\bm{\Gamma} \in \mathcal{R}^{\alpha}_g\}.
 \end{align*}
 
 \vspace{-1em}
 \section{Proof of Proposition \ref{prop:pe_gnn}} \label{proof:pe_gnn}
 In the original graph, denote the input feature of node $i$ as $\bm{x}_i^{(0)}$, the edge feature of edge $(j,i)$ as $\bm{e}_{j,i}$, and the output of the $k$-th layer of node $i$ as $\bm{x}^{(k)}_i$. In the permuted graph, denote the input feature of node $i$ as $\hat{\bm{x}}_i^{(0)}$, the edge feature of edge $(j,i)$ as $\hat{\bm{e}}_{j,i}$, and the output of the $k$-th layer for node $i$ as $\hat{\bm{x}}^{(k)}_i$. Due to the permutation relationship, we have 
 \begin{equation}\label{pe:0}
 	\begin{aligned} 
 		&\hat{\bm{e}}_{\pi(j),\pi(i)} = \bm{e}_{j,i}, \quad \hat{\bm{x}}_{\pi(i)}^{(0)} = \bm{x}_i^{(0)}, \\ &\mathcal{N}(\pi(i) ) = \{\pi(j), j \in \mathcal{N}(i) \}.
 	\end{aligned} 
 \end{equation}

 For any fixed $\pi$ and $n$, we prove $\bm{x}_i^{(n)} = \hat{\bm{x}}_{\pi(i)}^{(n)}, \forall i$ by induction. 
 1) The base case of $n = 0$ follows (\ref{pe:0}).
 
 2) Assume $\bm{x}_i^{(s-1)} = \hat{\bm{x}}_{\pi(i)}^{(s-1)}, \forall i$ when $n = s - 1$.
 	Show $n = s$ holds: In the $s$-th layer, the following update rule is applied
 	\begin{equation}\label{upd:layer}
 	\begin{aligned} 
 	&\bm{x}_i^{(s)} = \alpha^{(s)}\left(\bm{x}_i^{(s-1)}, \phi^{(s)} \left\{\left[\bm{x}_j^{(s-1)},\bm{e}_{j,i}\right]: j \in \mathcal{N}(i)   \right\}   \right), \\
 	&\hat{\bm{x}}_{\pi(i)}^{(s)} = \alpha^{(s)}\left(\hat{\bm{x}}_{\pi(i)}^{(s-1)}, \phi^{(s)} \left\{\left[\hat{\bm{x}}_{j}^{(s-1)},\hat{\bm{e}}_{j,\pi(i)}\right]: j \in \mathcal{N}(\pi(i))   \right\}   \right).
 	\end{aligned}
 	\end{equation}

 Following (\ref{pe:0}), (\ref{upd:layer}) and the induction hypothesis, we have 
 \begin{equation}\label{pe:T}
 	\begin{aligned} 
 		\hat{\bm{e}}_{\pi(j),\pi(i)} = \bm{e}_{j,i}, \quad \hat{\bm{x}}_{\pi(i)}^{(s-1)} = \bm{x}_i^{(s-1)}, \\ \mathcal{N}(\pi(i) ) = \{\pi(j), j \in \mathcal{N}(i) \}.
 	\end{aligned}
 \end{equation}
 
 Plugging (\ref{pe:T}) into (\ref{upd:layer}), we have $\bm{x}_i^{(s)} = \hat{\bm{x}}_{\pi(i)}^{(s)}$. Thus, for an $S$-layer MPGNN, the output satisfies
 \begin{align}
 	\bm{x}_i^{(S)} = \hat{\bm{x}}_{\pi(i)}^{(S)}.
 \end{align}
 
 The output matrix of the original graph is $\bm{X} = [\bm{x}_1, \cdots, \bm{x}_{|V|}]^T$ and the output matrix of the permuted graph is $\hat{\bm{X}} = [\hat{\bm{x}}_1, \cdots, \hat{\bm{x}}_{|V|}]^T$. Thus $\bm{X}_{(i,:)} = \hat{\bm{X}}_{(\pi(i), :)}$ and we have
 \begin{align*}
 	\Phi((\pi \star \bm{Z}, \pi \star A)) = \hat{\bm{X}} = \pi \star \bm{X} = \pi \star \Phi(( \bm{Z}, A)).
 \end{align*}

 \section{Proof of Theorem \ref{thm:MPGNNMB}} \label{proof:MPGNNMB}
 In MB-DLAs, the maximal degree of nodes should be bounded by some constant, denoted by $\Delta$. The total number of iterations of MB-DLA and the number of layers of MPGNN denoted by $S$. The update of MB-DLA at the $t$-th iteration can be written as
 \begin{align} \label{upd:mb-dla}
 &\hat{\bm{m}}_{i,j}^{(t)} = h_2^{(t)}\left(h_1^{(t)}(\hat{\bm{x}}_i^{(t-1)}),\bm{e}_{i,j}\right), \notag\\
 &\hat{\bm{x}}^{(t)}_i = g_2^{(t)}\left(\hat{\bm{x}}^{(t-1)}_i,  g_1^{(t)}\left(\left\{\hat{\bm{m}}^{(t)}_{j,i}: j \in \mathcal{N}(i)\right\}\right)\right).
 \end{align}
 
 The update of an MPGNN at the $k$-layer can be written as
 \begin{align} \label{upd:mpgnn} 
 \bm{x}_i^{(k)} = \alpha^{(k)}\left(\bm{x}_i^{(k-1)}, \phi^{(k)} \left\{\left[\bm{x}_j^{(k-1)},\bm{e}_{j,i}\right]: j \in \mathcal{N}(i) \right\}   \right).
 \end{align}
 
 1) We first show that the inference stage of an MPGNN can be viewed as an MB-DLA, i.e., for all $0 \leq s \leq S$ and $\{\alpha^{(k)},\phi^{(k)}\}_{1 \leq k \leq s}$, there exists $\{h_2^{(t)},h_1^{(t)}, g_2^{(t)}, g_1^{(t)}\}_{1 \leq t \leq s}$ such that $\hat{\bm{x}}^{(s)}_i = \bm{x}_i^{(s)}$. We prove it by induction. The base case $\hat{\bm{x}}_i^{(0)} = \bm{x}^{(0)}_i$ holds because both $\hat{\bm{x}}_i^{(0)}$ and $\bm{x}^{(0)}_i$ are node features of the same node. We then assume $\hat{\bm{x}}^{(n)}_i = \bm{x}_i^{(n)}$ when $n = s - 1$. When $n = s$, we construct 
 \begin{align*}
 	\bm{y}_i = h_1^{(s)}(\hat{\bm{x}}_i^{(s-1)}) = \hat{\bm{x}}_i^{(s-1)}, h_2^{(s)}\left(\bm{y}_i,\bm{e}_{i,j}\right) = [\bm{y}_i,\bm{e}_{i,j}],
 \end{align*}
 and thus
 \begin{align*}
 	\hat{\bm{m}}_{i,j}^{(s)} = \left[\bm{x}_j^{(s-1)},\bm{e}_{j,i}\right].
 \end{align*}
 
 Let $g_2^{(s)} = \alpha^{(s)}$ and $g_1^{(s)} = \phi^{(s)}$, we have $\hat{\bm{x}}^{(s)}_i = \bm{x}_i^{(s)}$.

 2) We then show that an MB-DLA can be viewed as an MPGNN, i.e., for all $0 \leq s \leq S$ and $\{h_2^{(t)},h_1^{(t)}, g_2^{(t)}, g_1^{(t)}\}_{1 \leq t \leq s}$, there exists $\{\alpha^{(k)},\phi^{(k)}\}_{1 \leq k \leq s}$, such that $\hat{\bm{x}}^{(s)}_i = \bm{x}_i^{(s)}$. We prove it by induction. The base case $\hat{\bm{x}}_i^{(0)} = \bm{x}^{(0)}_i$ holds because both $\hat{\bm{x}}_i^{(0)}$ and $\bm{x}^{(0)}_i$ are node features of the same node. We then assume $\hat{\bm{x}}^{(n)}_i = \bm{x}_i^{(n)}$ when $n = s - 1$. We first define some notations. For a set of vectors $\mathcal{Z} = \{\bm{z}_1, \cdots \bm{z}_{\Delta} \}$, where $\bm{z}_i \in \mathbb{C}^n$, we define the order of variables in the set by the order of real part of its first coordinate. Let $\tau_i(\mathcal{Z}), 1 \leq i \leq \Delta$ denote the function that selects the $i$-th element in a multiset $\mathcal{Z}$. Denote $\bm{y}_{j,i} = \left[\bm{x}_j^{(s-1)},\bm{e}_{j,i}\right]$, $\mathcal{Y}_i = \{\bm{y}_{1,i}, \cdots \bm{y}_{\Delta,i} \}$ and define $\chi_{1}: \bm{x} \mapsto \bm{x}_{(1:d_1)}$, $\chi_{2}: \bm{x} \mapsto \bm{x}_{(d_1+1:d_2)}$. We construct 
 \begin{align*}
 \phi^{(s)} (\mathcal{Y}_i)  = g_2^{(s)}\left.(\left.\{ h_2\left(h_1(\chi_{1} (\tau_1(\mathcal{Y}_i) ) ), \chi_{2} (\tau_1(\mathcal{Y}_i) ) \right), \right.\right.\\
 \left.\left.\cdots, h_2\left(h_1( \chi_{1}(\tau_{\Delta}(\mathcal{Y}_i))  ), \chi_{2}(\tau_{\Delta}(\mathcal{Y}_i))\right)   \right.\}\right.),
 \end{align*}
 and $\alpha^{(s)} = g_1^{(s)}$, we then obtain $ \bm{x}^{(s)}_i = \hat{\bm{x}}_i^{(s)}$.
 
 \section{Proof of Proposition \ref{prop:wmmse}} \label{proof:wmmse}
 Here, we consider WMMSE \cite{Shi2011An} in the original paper's setting, which includes (\ref{eq:sys_mod}) as a special case. The WMMSE algorithm considers a $K$ cell interfering broadcast channel where base station (BS) $k$ serves $I_k$ users. Denote $\bm{H}_{i_k,j}$ as the channel from base station $j$ to user $i_k$, $\bm{V}_{i_k}$ as the beamformer that BS $k$ uses to transmit symbols to user $i_k$, $w_{i_k}$ as the weight of user $i_k$, and $\sigma^2_{i_k}$ as the variance of noise for user $i_k$. The problem formulation is
\begin{equation}
	\footnotesize
	\begin{aligned}
	&\underset{\bm{V}}{\text{maximize}}
	& & \sum_{k=1}^{K} \sum_{i=1}^{I_k} w_{i_k} \text{logdet}\left.(\bm{I} + \bm{H}_{i_k,k} \bm{V}_{i_k} \bm{V}_{i_k}^{H}\bm{H}_{i_k,k}^{H} \right. \\
	& & & \left. \left(\sum_{(l,j) \neq (i,k)}  \bm{H}_{i_k,j} \bm{V}_{l_j} \bm{V}_{l_j}^{H}\bm{H}_{i_k,j}^{H} + \sigma_{i_k}^2 \bm{I} \right)^{-1}\right.) \\
	& \text{subject to}
	& & \text{Tr}(\bm{V}_{i_k}\bm{V}_{i_k}^H) \leq P_{\text{max} }, \forall k, \notag
	\end{aligned}
\end{equation}

The WMMSE algorithm is shown in Algorithm \ref{alg:wmmse}. We first model this system as a graph. We treat the $i_k$-th user as the $i_k$-th node in the graph. The node features are $[w_{i_k},\sigma_{i_k},\bm{H}_{i_k,k}]$. The internal state of node $i_k$ at the $(p-1)$-th iteration is $[\bm{U}^{(p-1)}_{i_k},\bm{W}^{(p-1)}_{i_k},\bm{V}^{(p-1)}_{i_k}, w_{i_k},\sigma_{i_k}, \bm{H}_{i_k,k}  ]$. An edge is drawn from the $l_j$-th node to the $i_k$-th node for all $l$ if there is an interference link between the $j$-th BS and the $i_k$-th user. The edge feature of the edge $(l_j,i_k)$ is $\bm{e}_{l_j,i_k} = [\bm{H}_{i_k,j},\bm{H}_{l_j,k}]$.

We show that a WMMSE algorithm with $T$ iterations is an MB-DLA with \emph{at most} $2T$ iterations. In the corresponding MB-DLA, we update the variables $\bm{U}_{i_k}$ and $\bm{W}_{i_k}$ at the odd iterations while updating the variable $\bm{V}_{i_k}$ at the even iterations. Specifically, at the $p$-th iteration with $p$ being an odd number, the $l_j$-th node broadcasts its state $\bm{V}_{l_j}^{(p-1)}$ along its edges. The edge $(l_j,i_k)$ processes the message by forming $\bm{m}_{l_j,i_k} = \bm{H}_{i_k,j}\bm{V}^{(p-1)}_{l_j}\bm{V}_{l_j}^{(p-1) H}\bm{H}^{H}_{i_k,j}$ and the node $i_k$ receives the message set $\{\bm{m}_{l_j,i_k},\forall l,j \}$. The agent $i_k$ first sums over the messages $\bm{M}_{i_k} = \sum_{l,j} \bm{m}_{l_j,i_k}$. Then the $i_k$-th node updates its internal state as $\bm{U}_{i_k}^{(p)} = (\bm{M}_{i_k} + \sigma_{i_k}^2 \bm{I})^{-1} \bm{H}_{i_k,k} \bm{V}^{(p-1)}_{i_k}$ and $\bm{W}^{(p)}_{i_k} =  \left(\bm{I} - \bm{U}^{(p)H}_{i_k} \bm{H}_{i_k,k}\bm{V}^{(p-1)}_{i_k} \right)^{-1}$. Specifically, at the $p$-th layer, we construct
\begin{equation}
\small
\begin{aligned}
& \bm{x}^{(p-1)}_{l_j} = [\bm{U}_{l_j}^{(p-1)},\bm{W}_{l_j}^{(p-1)},\bm{V}_{l_j}^{(p-1)},w_{l_j},\sigma_{i_k},\bm{H}_{l_j,j}], \\ 
& \bm{e}_{l_j,i_k} = [\bm{H}_{i_k,j},\bm{H}_{l_j,k}], \\
& \bm{y}_{i_k} = h_1^{(p)}\left(\bm{x}_{i_k}^{(p-1)}\right) = \bm{x}_{i_k}^{(p-1)}, \\
& \bm{m}_{l_j,i_k} = h_2^{(p)}\left(\bm{y}_{i_k}, \bm{e}_{l_j,i_k} \right)= \bm{H}_{i_k,j}\bm{V}^{(p-1)}_{l_j}\bm{V}_{l_j}^{(p-1) H}\bm{H}^{H}_{i_k,j}, \\
&\bm{M}_{i_k} = g_1^{(p)}(\{\bm{m}_{l_j,i_k},\forall l,j \}) = \sum_{l,j} \bm{m}_{l_j,i_k}, \\
& \bm{U}_{i_k}^{(p)} =  (\bm{M}_{i_k} + \sigma_{i_k}^2 \bm{I})^{-1} \bm{H}_{i_k,k} \bm{V}^{(p-1)}_{i_k}, \\
& \bm{W}^{(p)}_{i_k} =  \left(\bm{I} - \bm{U}^{(p)H}_{i_k} \bm{H}_{i_k,k}\bm{V}^{(p-1)}_{i_k} \right)^{-1},  \\
&\bm{x}^{(p)}_{i_k} = g_2^{(p)}(\bm{x}_{i_k}^{(p-1)},\bm{M}_{i_k} \\
&= [\bm{U}_{i_k}^{(p)}, \bm{W}^{(p)}_{i_k}, \bm{V}^{(p-1)}_{i_k}, w_{i_k}, \sigma_{i_k}, \bm{H}_{i_k,k} ]. \notag
\end{aligned}
\end{equation}

At the $p$-th iteration where $p$ is even, the $l_j$-th node broadcasts its state $[\bm{V}_{l_j}^{(p-1)},\bm{W}_{l_j}^{(p-1)}]$ along its edges. The edge $(l_j,i_k)$ processes the message by forming $\bm{m}_{l_j,i_k} = \bm{H}^{H}_{l_j,k}\bm{U}^{(p-1)}_{l_j}\bm{W}^{(p-1)}_{l_j}\bm{U}_{l_j}^{(p-1)H}\bm{H}_{l_j,k}$. Node $i_k$ receives the message set $\{\bm{m}_{l_j,i_k},\forall l,j \}$. The agent $i_k$ first sums over the messages $\bm{M}_{i_k} = \sum_{l,j} \bm{m}_{l_j,i_k}$. Then the $i_k$-th node updates its internal state as \\$\bm{V}^{(p)}_{i_k} =  w_{i_k}\left(\bm{M}_{i_k} + \mu_{i_k}^* \bm{I}  \right)^{-1} \bm{H}_{i_k,k}^{H} \bm{U}^{(p-1)}_{i_k}\bm{W}_{i_k}^{(p-1)}$. Specifically, we construct

\begin{equation}
\small
	\begin{aligned}
	& \bm{x}^{(p-1)}_{l_j} = [\bm{U}_{l_j}^{(p-1)},\bm{W}_{l_j}^{(p-1)},\bm{V}_{l_j}^{(p-1)},w_{l_j},\sigma_{i_k},\bm{H}_{l_j,j}], \\
	& \bm{e}_{l_j,i_k} = [\bm{H}_{i_k,j},\bm{H}_{l_j,k}], \\
	& \bm{y}_{i_k} = h_1^{(p-1)}\left(\bm{x}_{i_k}^{(p-1)}\right) = \bm{x}_{i_k}^{(p-1)}, \\ 
	&  \bm{m}_{l_j,i_k} = h_2^{(p-1)}\left(\bm{y}_{i_k}, \bm{e}_{l_j,i_k} \right)  \\ &  = w_{l_j}\bm{H}^{H}_{l_j,k}\bm{U}^{(p-1)}_{l_j}\bm{W}^{(p-1)}_{l_j}\bm{U}_{l_j}^{(p-1)H}\bm{H}_{l_j,k}, \\
	&\bm{M}_{i_k} = g_1^{(p)}(\{\bm{m}_{l_j,i_k},\forall l,j \}) = \sum_{l,j} \bm{m}_{l_j,i_k}, \\
	& \hat{\bm{V} }_{i_k}(\mu_{i_k}) =  w_{i_k}\left(\bm{M}_{i_k} + \mu_{i_k} \bm{I}  \right)^{-1} \bm{H}_{i_k,k}^{H} \bm{U}^{(p-1)}_{i_k}\bm{W}_{i_k}^{(p-1)}, \\ 
	& \mu_{i_k}^* = \underset{\mu_{i_k} \geq 0, \text{Tr}(\hat{\bm{V} }_{i_k}^{H}  \hat{\bm{V} }_{i_k}) \leq P_{\text{max} }}{\text{argmin}} \quad \mu_{i_k}, \\
	& \bm{V}^{(p)}_{i_k} =  w_{i_k}\left(\bm{M}_{i_k} + \mu_{i_k}^* \bm{I}  \right)^{-1} \bm{H}_{i_k,k}^{H} \bm{U}^{(p-1)}_{i_k}\bm{W}_{i_k}^{(p-1)},\\
	&\bm{x}_{i}^{(p)} = g_1^{(p)}(\bm{x}_{i_k}^{(p-1)},\bm{M}_{i_k}) \\ 
	&= [\bm{U}_{i_k}^{(p-1)}, \bm{W}^{(p-1)}_{i_k}, \bm{V}^{(p)}_{i_k}, w_{i_k}, \sigma_{i_k}, \bm{H}_{i_k,k} ]. \notag
	\end{aligned}
\end{equation}

 This completes the proof for Proposition \ref{prop:wmmse}.
 
 \begin{algorithm}  
 	\scriptsize
 	\caption{WMMSE Algorithm \cite{Shi2011An}}  
 	\label{alg:wmmse}  
 	
 	\begin{algorithmic}[1]
 		\State Initialize $\bm{V}_{i_k}$.
 		\For{$t=0...T$}
 		\State $\bm{U}_{i_k} \leftarrow \left(\sum_{(l,j)} \bm{H}_{i_k,j}\bm{V}_{l_j}\bm{V}_{l_j}^{H}\bm{H}^{H}_{i_k,j}   + \sigma^2_{i_k}\bm{I} \right)^{-1}\bm{H}_{i_k,k} \bm{V}_{i_k}$
 		\State $\bm{W}_{i_k} \leftarrow \left(\bm{I} - \bm{U}_{i_k}^{H} \bm{H}_{i_k,k}\bm{V}_{i_k} \right)^{-1}$
 		\State $\bm{V}_{i_k} \leftarrow w_{i_k}\left(\sum_{l,j} w_{l_j}\bm{H}^{H}_{l_j,k}\bm{U}_{l_j}\bm{W}_{l_j}\bm{U}_{l_j}^{H}\bm{H}_{l_j,k} + \mu^*_{i_k} \bm{I}  \right)^{-1} \bm{H}_{i_k,k}^{H} \bm{U}_{i_k}\bm{W}_{i_k}$
 		\EndFor
 		\State Output $\bm{V}_{i_k}$.
 	\end{algorithmic}  
 \end{algorithm}
 
\section*{Acknowledgments}
The authors would like to thank anonymous reviewers and the editors for their constructive comments.

\bibliographystyle{ieeetr}
\bibliography{Reference}

\end{document}